%% file: main.tex
\documentclass[letterpaper,10pt,conference]{ieeeconf}

\IEEEoverridecommandlockouts
\usepackage[T1]{fontenc}
\usepackage{microtype}

\usepackage{amsmath}
\usepackage{amssymb}
\usepackage{bm}
\usepackage{siunitx}

\usepackage{xcolor}

\usepackage{graphicx}
\usepackage{booktabs}
\usepackage[font=footnotesize]{subfig}
\usepackage{tikz}

\usetikzlibrary{
  arrows.meta,
  shapes,
  calc,
  positioning
}

\usepackage{cite}

\usepackage[hidelinks]{hyperref}

\input{symbols}

\title{
Aerial Non-Stop Trajectories Preserving the Equilibrium of
Loads Suspended by Variable-Length Cables
}
\author{%
	 Antonio Franchi$^{1,2}$, Alberto Landi$^3$, Chiara Gabellieri$^1$
  \thanks{%
  Robotics and Mechatronics (RaM) lab, EEMCS Faculty, University of Twente, 7500 AE Enschede, The Netherlands (e-mail: c.gabellieri@utwente.nl, schol@r-franchi.eu)
  }
  \thanks{%
   Department of
Computer, Control and Management Engineering, Sapienza University of Rome,  00185 Rome, Italy
  }
  \thanks{%
  Department of Information Engineering, University of Pisa, 56126 Pisa, Italy (e-mail: alberto.landi@unipi.it)
  }
  \thanks{%
    This work has been partially funded by the MSCA Staff Exchange project
    NEUTRAWEED under grant No.~101182891.
  }%
}

\newif\ifarXiv
\arXivtrue

\begin{document}

\maketitle

\begin{abstract}
This work studies equilibrium-preserving non-stop trajectories of
aerial carriers connected to rigid loads by variable-length cables.
Internal-force motions vary the cable directions without changing the
load wrench, while cable-length actuation shapes the radial realization
of the carrier trajectories. We derive an explicit cable-length law for
realizing prescribed carrier-position 
profiles along a fixed direction,
with constant altitude as a relevant specialization. Under this
constraint, a carrier stops exactly when its cable direction stops
changing. This characterization reveals a distinction between
unrestricted and level motion: variable cable lengths enable bounded
non-stop motion with two carriers, whereas periodic level non-stop
motion is obstructed for two carriers and constructively achieved for
three or more under explicit geometric and feasibility conditions.
Numerical simulations illustrate the constant-altitude construction.
\end{abstract}

\section{Introduction and Related Work}

Cable-suspended manipulation provides a lightweight and mechanically
simple means of transporting loads of different sizes and shapes
\cite{lv2026multirotor,drones8020035}. Cooperative aerial systems
distribute the payload among multiple vehicles and enable full-pose
manipulation of rigid loads. Two-carrier systems have mainly addressed
slender loads
\cite{pereira2018asymmetric,gabellieri2023equilibria}, whereas
rigid-body manipulation commonly employs three or more carriers
\cite{masone2016,li2021cooperative}.

Most aerial transportation methods rely on hovering 
multirotors, but non-stop motion permits carriers that require a
positive forward speed. Rotating equilibria have been investigated for
energy-efficient transport of point-mass loads
\cite{foss2026energy}, while cable-towed fixed-wing systems directly
motivate persistent motion
\cite{williams2009dynamics,quenneville2023experimental}. Preserving the
pose of a suspended rigid load under this requirement is challenging
because the cable forces must continuously generate the same load
wrench. At least three carriers are required for persistent non-stop motion,
for which constructive coordination methods, geometric
characterizations, and feedback-based extensions are available
\cite{gabellieri2025coordinated,van2026geometry,
girardello2025trajectory} with fixed cable lengths. Level flight is especially
relevant for wing-borne carriers because it avoids repeated,
energy-consuming departures from steady forward flight. 
 With fixed cable lengths, however, the
carriers are constrained to spheres centered at the load-side attachment points,
preventing independent altitude shaping. 
Cable-length
actuation has been studied for payload deployment and maneuvering
\cite{goodarzi2016autonomous,li2023flight,li2020design}, but not for
enforcing level equilibrium-preserving non-stop trajectories.

This work addresses the generation of carrier and variable-cable-length trajectories compatible with a constant load pose. 

The main contributions are: 1) an explicit radial-shaping law for imposing one directional component of each carrier position; 2) a two-carrier construction for bounded unrestricted non-stop motion; 3) an obstruction result for periodic two-carrier constant-altitude motion; and 4) a phase-shifted constant-altitude non-stop construction for three or more carriers under explicit geometric and feasibility conditions. These results concern reference generation and do not establish the dynamic tracking of the resulting trajectories.
Numerical simulations illustrate the
constant-altitude construction.

\section{Modeling}
\label{sec:vl_cables}


{Let \(\Eaff\) be an oriented three-dimensional Euclidean affine space, with associated translation vector space \(\Vtrans\),
endowed with the Euclidean inner product
\(\inner{\cdot}{\cdot}\), induced norm \(\norm{\cdot}\), and 
cross product. The inner product identifies \(\Vtrans\) with
its dual \(\Vtrans^*\). Intrinsic physical \emph{points} belong to
\(\Eaff\), whereas displacements between points, velocities, and
angular velocities are represented by vectors in \(\Vtrans\); forces
and moments are represented by their Euclidean-dual vectors in
\(\Vtrans^*\).}


{A right-handed Cartesian frame
\(\Frame{a}
=\{O_{\mathsf a},x_{\mathsf a},y_{\mathsf a},z_{\mathsf a}\}\)
consists of an origin \(O_{\mathsf a}\in\Eaff\) and an ordered
right-handed orthonormal basis
\((x_{\mathsf a},y_{\mathsf a},z_{\mathsf a})\) of \(\Vtrans\).
It identifies each intrinsic vector \(v\in\Vtrans\) with the coordinate
array
\(\coord{a}{v}
\define
[\inner{x_{\mathsf a}}{v}\ \
 \inner{y_{\mathsf a}}{v}\ \
 \inner{z_{\mathsf a}}{v}]^\top
\in\Rthree\).}
{For a point \(P\in\Eaff\), its \emph{position} in \(\Frame{a}\)
is defined as the coordinate array of the displacement vector
\(\overrightarrow{O_{\mathsf a}P}\) and is denoted by
\(\pointcoord{a}{P}
\define
\coord{a}{\big(\overrightarrow{O_{\mathsf a}P}\big)}
\in\Rthree\).
Thus, \(P\) is an intrinsic point, whereas
\(\pointcoord{a}{P}\) is its coordinate representation relative to
\(\Frame{a}\).} 

{Given two Cartesian frames \(\Frame{a}\) and \(\Frame{c}\), the
rotation matrix \(\rot{a}{c}\in\SOthree\) maps vector coordinates
expressed in \(\Frame{c}\) into coordinates expressed in
\(\Frame{a}\), namely,
\(\coord{a}{v}=\rot{a}{c}\coord{c}{v}\) for every \(v\in\Vtrans\).
Its columns are the coordinate arrays in \(\Frame{a}\) of the basis
vectors of \(\Frame{c}\), that is,
\(\rot{a}{c}
=
[\coord{a}{x_{\mathsf c}}\ \
 \coord{a}{y_{\mathsf c}}\ \
 \coord{a}{z_{\mathsf c}}]\).}
{The rigid transformation from \(\Frame{c}\) to \(\Frame{a}\)
is represented by
\(\pose{a}{c}
\define
\left[
\begin{smallmatrix}
  \rot{a}{c} & \pointcoord{a}{O_{\mathsf c}}\\
  \bm 0_{1\times3} & 1
\end{smallmatrix}
\right]
\in\SEthree\).
Accordingly, the position-coordinate arrays of any point
\(P\in\Eaff\) satisfy
\(
\left[
\begin{smallmatrix}
  \pointcoord{a}{P}\\1
\end{smallmatrix}
\right]
=
\pose{a}{c}
\left[
\begin{smallmatrix}
  \pointcoord{c}{P}\\1
\end{smallmatrix}
\right]
\),
or, equivalently,
\(\pointcoord{a}{P}
=
\pointcoord{a}{O_{\mathsf c}}
+
\rot{a}{c}\pointcoord{c}{P}\).}


Let
{\(\frameW
=\{\originW,x_{\mathsf w},y_{\mathsf w},z_{\mathsf w}\}\)}
be an inertial frame with origin \(\originW\), and let
{\(\frameB
=\{\originB,x_{\mathsf b},y_{\mathsf b},z_{\mathsf b}\}\)}
be a body-fixed frame whose origin \(\originB\) coincides with the
{center} of mass (CoM) of the suspended rigid body.
{The load pose is represented by its position and orientation
components \(\loadPos\in\Rthree\) and \(\loadRot\in\SOthree\),
respectively, or, equivalently, by
\(
  \loadPose
  =
  \left[
  \begin{smallmatrix}
    \loadRot & \loadPos\\
    \bm 0_{1\times3} & 1
  \end{smallmatrix}
  \right]
  \in\SEthree
\).}
The angular velocity of the load
{relative to} the inertial frame, expressed in \(\frameB\), is
denoted by {\(\loadAngVel\)}.
The mass and
{the body-frame coordinate matrix of the}
rotational inertia tensor of the load are denoted by
{\(\loadMass\) and \(\loadInertia\)}, respectively.

The rigid body is manipulated through \(n\) massless cables, each of
variable length
{\(\cableLength{i}(t)>0\)}
(with {\(i\in\{1,\ldots,n\}\)}) and attached to
{points \(\anchorPoint{i}\in\Eaff\)}
on the object, where
{%
\(
  \attachmentCoord{i}
  \define
  \coord{b}{\big(\overrightarrow{\originB\anchorPoint{i}}\big)}
  \in\Rthree
\)
denotes the constant body-frame coordinate array of the displacement
from \(\originB\) to \(\anchorPoint{i}\).} {Let \(\carrierPoint{i}\in\Eaff\) denote the carrier-side
endpoint of the \(i\)-th cable, and let
\(\cableDirIntrinsic{i}\in\Vtrans\) be the unit vector directed from
\(\anchorPoint{i}\) to \(\carrierPoint{i}\).
If the cable is taut and straight, then
\(
  \overrightarrow{\anchorPoint{i}\carrierPoint{i}}
  =
  \cableLength{i}\cableDirIntrinsic{i}
\)
and
\(\norm{\cableDirIntrinsic{i}}=1\).
The world-frame coordinate array of \(\cableDirIntrinsic{i}\) is
denoted by \(\cableDir{i}\in\sphere{2}\).}
{Finally, let \(\cableTension{i}>0\) denote the corresponding
cable tension. The force exerted by cable
\(i\) on the load is}
\begin{equation}
{%
  \cableForce{i}
  =
  \cableTension{i}\cableDir{i}.
}
\label{eq:fi1}
\end{equation}


The world-frame position of the aerial carrier connected to the free
end of the \(\ith{i}\) cable is
\begin{equation}
{%
  \carrierPos{i}
  =
  \loadPos
  +
  \loadRot\attachmentCoord{i}
  +
  \cableLength{i}\cableDir{i}.
}
\label{eq:carrier_position}
\end{equation}
{Equation~\eqref{eq:carrier_position} is the world-frame
coordinate representation of the intrinsic affine relation
\(
  \overrightarrow{\originW\carrierPoint{i}}
  =
  \overrightarrow{\originW\originB}
  +
  \overrightarrow{\originB\anchorPoint{i}}
  +
  \overrightarrow{\anchorPoint{i}\carrierPoint{i}}
\).}

From~\eqref{eq:carrier_position}, the cable direction is obtained as
\begin{equation}
{%
  \cableDir{i}
  =
  \tfrac{1}{\cableLength{i}}
  \big(
    \carrierPos{i}
    -
    \loadPos
    -
    \loadRot\attachmentCoord{i}
  \big).
}
\label{eq:qiW}
\end{equation}
When the load pose is constant,
{%
\(
  \dloadPos=\bm 0
\)
and
\(
  \frac{\mathrm d}{\mathrm dt}\loadRot=\bm 0
\),
}
\begin{equation}
{%
  \dcarrierPos{i}
  =
  \cableLength{i}\dcableDir{i}
  +
  \dcableLength{i}\cableDir{i}.
}
\label{eq:carrier_velocity_static}
\end{equation}
Equation~\eqref{eq:carrier_velocity_static} shows that the carrier
velocity consists of two components: the term
{\(\cableLength{i}\dcableDir{i}\)} is tangential to the sphere
centered at the cable attachment point and is generated by variations
of the cable direction, whereas the term
{\(\dcableLength{i}\cableDir{i}\)} is radial and is generated
directly by winch actuation.
{Since
\(\left(\cableDir{i}\right)^\top\dcableDir{i}=0\), these two
components are orthogonal and}
\begin{equation}
{%
  \norm{\dcarrierPos{i}}^2
  =
  \cableLength{i}^{\,2}\norm{\dcableDir{i}}^2
  +
  \dcableLength{i}^{\,2}.
}
\label{eq:carrier_speed_decomposition}
\end{equation}
Therefore, with variable-length cables, the carrier can remain in
motion even when the tangential component vanishes, provided that the
cable length is changing. In the fixed-length case,
{\(\dcableLength{i}=0\)}, Eq.~\eqref{eq:carrier_velocity_static}
reduces to
{%
\(
  \dcarrierPos{i}
  =
  \cableLength{i}\dcableDir{i}
\),
}
which is the {simplified} relation considered in
\cite{gabellieri2025coordinated}.

Throughout the paper, a carrier trajectory is called \emph{non-stop}
if there exists a constant
{\(\speedLowerBound>0\)}
such that the speed of every carrier remains uniformly bounded away
from zero, i.e.,
\[
  \norm{\dcarrierPos{i}(t)}
  \geq
  \speedLowerBound,
  \qquad
  \forall t\geq0,
  \quad
  \forall i\in\{1,\ldots,n\}.
\]
This condition excludes both isolated stops and trajectories whose
speed asymptotically approaches zero.

Differentiating~\eqref{eq:fi1} gives the kinematic-force relation
\begin{equation}
{%
  \dcableForce{i}
  =
  \dcableTension{i}\cableDir{i}
  +
  \cableTension{i}\dcableDir{i}
  \define
  \big(\dcableForce{i}\big)^\parallel
  +
  \big(\dcableForce{i}\big)^\perp ,
}
\label{eq:fidot1_base}
\end{equation}
where
{%
\(
  \big(\dcableForce{i}\big)^\parallel
  =
  \dcableTension{i}\cableDir{i}
\)
}
is parallel to the cable direction and represents a variation of the
force magnitude caused by a change in cable tension, whereas
{%
\(
  \big(\dcableForce{i}\big)^\perp
  =
  \cableTension{i}\dcableDir{i}
\)
}
is orthogonal to the cable direction and represents a variation of the
force direction caused by the time variation
{\(\dcableDir{i}\)}
of the cable-direction coordinate array.
{The following sections exploit these complementary components:
equilibrium-preserving force variations generate the cable-direction
motion, whereas cable-length actuation shapes its radial realization.}


{We choose the positive \(z_{\mathsf w}\)-axis opposite to the
gravity direction and assume a uniform gravitational acceleration of
magnitude \(\gravityScalar>0\).}
The load translational dynamics are 
\begin{equation}
{%
  \loadMass\ddloadPos
  =
  -\loadMass\gravityScalar\worldVertical
  +
  \sum_{i=1}^{n}\cableForce{i},
}
\label{eq:load_trans}
\end{equation}
whereas the rotational dynamics
{about the load CoM, with all vector coordinates}
expressed in the body frame, are
\begin{equation}
{%
  \loadInertia\dloadAngVel
  =
  -\loadAngVel
  \times
  \left(
    \loadInertia\loadAngVel
  \right)
  +
  \sum_{i=1}^{n}
  \left[
    \attachmentCoord{i}
  \right]_{\times}
  \rot{b}{w}\cableForce{i}.
}
\label{eq:load_rot}
\end{equation}
Here,
{\([\cdot]_{\times}\)}
is the skew-symmetric operator such that
{%
\(
  [\bm x]_{\times}\bm y
  =
  \bm x\times\bm y
\).
}
Thus,
{%
\(
  [\attachmentCoord{i}]_{\times}
  \rot{b}{w}\cableForce{i}
\)
}
is the body-frame coordinate array of the \(i\)-th cable moment about
the load CoM.

\section{Equilibrium-Preserving Force Motions}
\label{sec:equilibrium_force_motions}

Let us consider a feasible static equilibrium, at which the wrench
applied to the load by the cables is constant.

\begin{definition}[Internal-force space]
\label{def:internal_force_space}
{Fixing the load CoM \(\originB\) as the moment pole, the
intrinsic \emph{grasp map} is
\(
  \Grasp:\Vtrans^n\rightarrow\WrenchSpace
\),
with
\(\WrenchSpace\define\Vtrans\times\Vtrans\), defined by
\[
  \Grasp(f_1,\ldots,f_n)
  \define
  \left(
    \sum_{i=1}^{n}f_i,\,
    \sum_{i=1}^{n}
    \overrightarrow{\originB\anchorPoint{i}}\times f_i
  \right).
\]
Its first and second components are the resultant force and the
resultant moment about \(\originB\), respectively. The intrinsic
internal-force space is
\(
  \ker\Grasp\subset\Vtrans^n
\).
}
\end{definition}

The world-frame cable-force arrays are stacked as
\begin{equation}
{%
  \cableForceStack
  \define
  \begin{bmatrix}
    {\cableForce{1}}^\top
    &
    \cdots
    &
    {\cableForce{n}}^\top
  \end{bmatrix}^{\top}.
}
\label{eq:stacked_cable_forces}
\end{equation}
The equilibrium condition reads
\begin{equation}
{%
  \grasp\cableForceStack
  =
  \equilibriumWrench,
}
\label{eq:wrench_equilibrium}
\end{equation}
where
{\(\grasp\in\mathbb{R}^{6\times3n}\) is the world-frame matrix
of \(\Grasp\):}
\begin{equation}
{%
  \grasp
  \define
  \begin{bmatrix}
    \bm I_3
    &
    \cdots
    &
    \bm I_3
    \\
    [\attachmentWorld{1}]_{\times}
    &
    \cdots
    &
    [\attachmentWorld{n}]_{\times}
  \end{bmatrix},
  \qquad
  \attachmentWorld{i}
  \define
  \loadRot\attachmentCoord{i}.
}
\label{eq:grasp_matrix}
\end{equation}
{Since the load pose is constant, \(\grasp\) is constant.}
Moreover,
\begin{equation}
{%
  \equilibriumWrench
  \define
  \begin{bmatrix}
    \loadMass\gravityScalar\worldVertical\\
    \bm 0_{3\times1}
  \end{bmatrix}
}
\label{eq:equilibrium_wrench}
\end{equation}
is the
{world-frame coordinate array of the cable wrench required to
balance gravity about \(\originB\)}.

{Define the admissible tensile-force set as}
\begin{equation}
{%
  \mathcal F_+
  \define
  \left\{
    \col{\bm f_1^{\mathsf w},\ldots,\bm f_n^{\mathsf w}}
    \ \middle|\
    \norm{\bm f_i^{\mathsf w}}>0,\ i=1,\ldots,n
  \right\}.
}
\label{eq:admissible_tensile_force_set}
\end{equation}
{The condition
\(\norm{\bm f_i^{\mathsf w}}>0\) is equivalent to
\(\cableTension{i}>0\), because
\(\cableForce{i}=\cableTension{i}\cableDir{i}\) and
\(\norm{\cableDir{i}}=1\).}
The prescribed static equilibrium is assumed feasible:
${%
  \equilibriumWrench
  \in
  \grasp\mathcal F_+
  \subseteq
  \im\grasp.
}$

\begin{prop}[Tensile equilibrium-force family]
\label{prop:equilibrium_force_family}
Let
{\(\particularForce\in\mathcal F_+\)}
and {\(\nullBasis\)} satisfy
\begin{equation}
{%
  \grasp\particularForce
  =
  \equilibriumWrench,
  \qquad
  \im\nullBasis
  =
  \ker\grasp.
}
\label{eq:particular_force_and_null_basis}
\end{equation}
Then, every tensile force distribution producing the prescribed
equilibrium wrench can be written as
\begin{equation}
{%
  \cableForceStack
  =
  \particularForce
  +
  \nullBasis\internalParam
  \in
  \mathcal F_+,
  \qquad
  \internalParam
  \in
  \mathbb{R}^{\dim\ker\grasp}.
}
\label{eq:forces_general}
\end{equation}
\end{prop}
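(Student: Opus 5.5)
The statement is the standard affine-solution-set description of a linear system, intersected with the tensile set. I will prove that the set of tensile solutions of \eqref{eq:wrench_equilibrium} coincides with the set of vectors of the form \eqref{eq:forces_general} that lie in \(\mathcal F_+\). Both inclusions follow from the linearity of \(\grasp\), which is a constant matrix because the load pose is fixed, as remarked after \eqref{eq:grasp_matrix}.

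\emph{Forward inclusion.} Take any \(\cableForceStack\in\mathcal F_+\) with \(\grasp\cableForceStack=\equilibriumWrench\). Subtracting the relation \(\grasp\particularForce=\equilibriumWrench\) from \eqref{eq:particular_force_and_null_basis} gives \(\grasp(\cableForceStack-\particularForce)=\bm 0\), so \(\cableForceStack-\particularForce\in\ker\grasp\). Since \(\im\nullBasis=\ker\grasp\), there is a coefficient vector \(\internalParam\) with \(\cableForceStack-\particularForce=\nullBasis\internalParam\). Hence \(\cableForceStack=\particularForce+\nullBasis\internalParam\), and membership in \(\mathcal F_+\) holds by assumption.

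The coefficient space needs some care. If \(\nullBasis\) has exactly \(\dim\ker\grasp\) linearly independent columns, that is, if it is a genuine basis, then \(\internalParam\in\mathbb R^{\dim\ker\grasp}\) and it is unique. I will read \(\nullBasis\) as a basis matrix, consistent with the notation and the parameter space in \eqref{eq:forces_general}. If \(\nullBasis\) were only a spanning matrix, the same argument still gives existence of \(\internalParam\), but in a larger space and without uniqueness.

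\emph{Converse inclusion.} Any vector \(\particularForce+\nullBasis\internalParam\) satisfies
\[
\grasp(\particularForce+\nullBasis\internalParam)=\equilibriumWrench+\bm 0=\equilibriumWrench.
\]
So every such vector that lies in \(\mathcal F_+\) is a tensile equilibrium distribution. The constraint ``\(\in\mathcal F_+\)'' in \eqref{eq:forces_general} is therefore exactly the condition that removes the non-tensile elements of the affine solution set. By the remark after \eqref{eq:admissible_tensile_force_set}, this constraint is equivalent to \(\cableTension{i}>0\) for every cable.

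None of these steps is a real obstacle. The only point that needs attention is the dimension of the coefficient space just discussed. It may also be worth noting that the set of admissible \(\internalParam\) is open and contains \(\internalParam=\bm 0\), because \(\particularForce\in\mathcal F_+\) and \(\mathcal F_+\) is open. This is presumably what later sections use to generate small internal-force motions.
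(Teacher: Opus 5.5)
Your proof is correct and follows the same route as the paper, which argues in one line that the equilibrium solutions form the affine set \(\particularForce+\ker\grasp\), parametrized by \(\nullBasis\), and that intersecting with \(\mathcal F_+\) enforces tensile admissibility; your two inclusions spell this out. Your caveat about the coefficient space is settled by the statement itself: a matrix \(\nullBasis\) with \(\dim\ker\grasp\) columns and \(\im\nullBasis=\ker\grasp\) has full column rank, so it is a genuine basis and \(\internalParam\) is unique.
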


\begin{proof}
{Equation~\eqref{eq:particular_force_and_null_basis} gives one
solution of \eqref{eq:wrench_equilibrium}, while
\(\im\nullBasis=\ker\grasp\) contains exactly the force variations
that generate zero resultant wrench. Intersecting the resulting
affine solution space with \(\mathcal F_+\) enforces tensile
admissibility.}
\end{proof}

\begin{definition}[Internal force]
\label{def:internal_force}
The internal-force array and its \(i\)-th block are defined by
\begin{equation}
{%
  \internalForceStack
  \define
  \nullBasis\internalParam
  \in
  \ker\grasp,
  \qquad
  \internalForceBlock{i}
  \define
  \big(\internalForceStack\big)_i.
}
\label{eq:internal_force_definition}
\end{equation}
Hence,
{%
\(
  \cableForceStack
  =
  \particularForce+\internalForceStack
\)
and
\(
  \grasp\internalForceStack=\bm0_{6\times1}
\).
}
Internal forces can therefore change the individual cable forces and
directions while preserving the load wrench.
\end{definition}

\subsection{Cycle-Induced Internal-Force Motions}

Let
\(\mathcal K_n=(\mathcal V,\mathcal V^2)\) be the complete directed
graph with vertex set
\(\mathcal V\define\{1,\ldots,n\}\), where vertex
\(i\in\mathcal V\) corresponds to carrier \(i\).
Consider a directed Hamiltonian cycle \(\hamCycle\), and let
{\(\incidence\in\{-1,0,1\}^{n\times n}\)}
be its incidence matrix.
Denote the cycle's ordered edges by
{\(e_1,\ldots,e_n\)}.

For a directed edge
{\(e_j=(e_j^1,e_j^2)\)}, let
{\(\edgeDir{e_j^1}{e_j^2}\)} be the world-frame coordinate array
of the unit vector  from \(\anchorPoint{e_j^2}\) to
\(\anchorPoint{e_j^1}\):
\begin{equation}
{%
  \edgeDir{e_j^1}{e_j^2}
  \define
  \norm{
    \attachmentCoord{e_j^1}
    -
    \attachmentCoord{e_j^2}
  }^{-1}
  \loadRot
  \big(
    \attachmentCoord{e_j^1}
    -
    \attachmentCoord{e_j^2}
  \big).
}
\label{eq:edge_unit_direction}
\end{equation}

\begin{definition}[Cycle-induced force matrix]
\label{def:cycle_force_matrix}
{The cycle-induced internal-force matrix is}
\begin{equation}
{%
  \cycleNullMatrix(\hamCycle)
  \define
  (\incidence\otimes\bm I_3)
  \operatorname{diag}
  \left(
    \edgeDir{e_1^1}{e_1^2},
    \ldots,
    \edgeDir{e_n^1}{e_n^2}
  \right)
  \in\mathbb{R}^{3n\times n}.
}
\label{eq:choice_of_N}
\end{equation}
\end{definition}

\begin{lemma}[Edge self-equilibration]
\label{lemma:edge_self_equilibration}
{Every column of \(\cycleNullMatrix(\hamCycle)\) has zero
resultant force and zero resultant moment about \(\originB\).}
\end{lemma}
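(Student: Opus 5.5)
The plan is to compute the $j$-th column of $\cycleNullMatrix(\hamCycle)$ explicitly and then apply $\grasp$ from \eqref{eq:grasp_matrix} block by block. Because of the Kronecker structure $(\incidence\otimes\bm I_3)$ and the block-diagonal factor in \eqref{eq:choice_of_N}, the $j$-th column is the stacked array whose $i$-th $3\times1$ block equals $\incidence_{ij}\,\edgeDir{e_j^1}{e_j^2}$. Since $\hamCycle$ is a directed Hamiltonian cycle on $n\ge2$ vertices without self-loops, column $j$ of the incidence matrix has exactly two nonzero entries, one $+1$ and one $-1$, located at the endpoints $e_j^1$ and $e_j^2$. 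With the sign convention $\incidence_{e_j^1 j}=1$ and $\incidence_{e_j^2 j}=-1$, the column therefore assigns the force $\edgeDir{e_j^1}{e_j^2}$ to cable $e_j^1$ and its opposite $-\edgeDir{e_j^1}{e_j^2}$ to cable $e_j^2$, with zero at every other cable. This is the main point to pin down. If the paper uses the opposite sign convention, the whole column is simply negated, and the conclusion is unchanged because the kernel is a linear subspace.

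With the column in hand, the resultant force is the sum of its blocks, which is $\edgeDir{e_j^1}{e_j^2}-\edgeDir{e_j^1}{e_j^2}=\bm0$.

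For the moment, the resultant about $\originB$ reads
\[
[\attachmentWorld{e_j^1}]_\times\edgeDir{e_j^1}{e_j^2}-[\attachmentWorld{e_j^2}]_\times\edgeDir{e_j^1}{e_j^2}
=(\attachmentWorld{e_j^1}-\attachmentWorld{e_j^2})\times\edgeDir{e_j^1}{e_j^2}.
\]
By \eqref{eq:grasp_matrix}, the lever-arm difference is $\loadRot(\attachmentCoord{e_j^1}-\attachmentCoord{e_j^2})$. By \eqref{eq:edge_unit_direction}, this vector is a positive scalar multiple of $\edgeDir{e_j^1}{e_j^2}$. The cross product of parallel vectors vanishes, so the moment is zero. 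Intrinsically, this just says that the two forces are equal and opposite and act along the line $\anchorPoint{e_j^2}\anchorPoint{e_j^1}$, so they form a zero-wrench pair.

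The only genuine obstacle is well-definedness. Equation \eqref{eq:edge_unit_direction} requires $\attachmentCoord{e_j^1}\ne\attachmentCoord{e_j^2}$, so I would state explicitly that the attachment points along each edge are assumed distinct. The proof then closes with the conclusion that each column lies in $\ker\grasp$.
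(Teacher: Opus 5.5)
Your proposal is correct and follows the same route as the paper: each column places $\pm\edgeDir{e_j^1}{e_j^2}$ on the two endpoint cables, so the resultant force cancels. The resultant moment then reduces to $(\attachmentWorld{e_j^1}-\attachmentWorld{e_j^2})\times\edgeDir{e_j^1}{e_j^2}=\bm 0$ because the two factors are parallel. Your explicit Kronecker-product computation, the remark on the incidence sign convention, and the note that $\attachmentCoord{e_j^1}\neq\attachmentCoord{e_j^2}$ is needed for \eqref{eq:edge_unit_direction} to be well defined are details the paper leaves implicit.
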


\begin{proof}
By \eqref{eq:choice_of_N}, each column applies two equal and opposite
forces, so its resultant force in \eqref{eq:stacked_cable_forces} is zero.
If the corresponding edge connects \(\anchorPoint{k}\) and
\(\anchorPoint{l}\), its resultant moment is, up to the common force
magnitude,
$
  \big(
    \attachmentWorld{k}
    -
    \attachmentWorld{l}
  \big)
  \times
  \edgeDir{k}{l}
  =
  \bm 0,
$
because the two factors are parallel by
\eqref{eq:edge_unit_direction} and~\eqref{eq:grasp_matrix}.
\end{proof}

\begin{prop}[Cycle-induced internal-force subspace]
\label{prop:cycle_internal_force_subspace}
The cycle-induced internal-force matrix satisfies
\begin{equation}
{%
  \im\cycleNullMatrix(\hamCycle)
  \subseteq
  \ker\grasp.
}
\label{eq:cycle_subspace_in_kernel}
\end{equation}
Consequently,
\begin{equation}
{%
  \cableForceStack
  =
  \particularForce
  +
  \cycleNullMatrix(\hamCycle)\cycleParam
  \in
  \mathcal F_+
}
\label{eq:cycle_force_distribution}
\end{equation}
preserves the prescribed equilibrium wrench for every
{\(\cycleParam\in\mathbb{R}^n\)}
for which the resulting force distribution remains tensile.
\end{prop}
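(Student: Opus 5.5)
The inclusion \eqref{eq:cycle_subspace_in_kernel} follows column by column from Lemma~\ref{lemma:edge_self_equilibration}, and the force parametrization \eqref{eq:cycle_force_distribution} then follows from linearity and Proposition~\ref{prop:equilibrium_force_family}.

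First, I would show that every column of \(\cycleNullMatrix(\hamCycle)\) lies in \(\ker\grasp\). By \eqref{eq:choice_of_N}, the \(j\)-th column is \((\incidence\otimes\bm I_3)\) applied to a vector whose only nonzero block is \(\edgeDir{e_j^1}{e_j^2}\) in position \(j\). Because \(\incidence\) is the incidence matrix of a directed cycle, column \(j\) of \(\incidence\) has exactly one \(+1\) and one \(-1\), at the endpoints of \(e_j\). Hence the column stacks \(+\edgeDir{e_j^1}{e_j^2}\) and \(-\edgeDir{e_j^1}{e_j^2}\) in the blocks of the two endpoints and zeros elsewhere; which endpoint carries which sign depends on the incidence convention and does not matter here.

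Multiplying this column by \(\grasp\) in \eqref{eq:grasp_matrix} gives two quantities.
\begin{itemize}
\item The force part is \(\edgeDir{e_j^1}{e_j^2}-\edgeDir{e_j^1}{e_j^2}=\bm 0\).
\item The moment part is \(\pm(\attachmentWorld{e_j^1}-\attachmentWorld{e_j^2})\times\edgeDir{e_j^1}{e_j^2}\). This vanishes because \(\attachmentWorld{k}-\attachmentWorld{l}=\loadRot(\attachmentCoord{k}-\attachmentCoord{l})\), which is parallel to \(\edgeDir{k}{l}\) by \eqref{eq:edge_unit_direction}.
\end{itemize}
This is exactly Lemma~\ref{lemma:edge_self_equilibration}, so I can simply invoke it.

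Since \(\ker\grasp\) is a linear subspace and it contains every column, it contains their span. This span is \(\im\cycleNullMatrix(\hamCycle)\), which proves \eqref{eq:cycle_subspace_in_kernel}.

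For the consequence, take any \(\cycleParam\in\mathbb{R}^n\). Linearity of \(\grasp\) and \eqref{eq:particular_force_and_null_basis} give
\[
\grasp\big(\particularForce+\cycleNullMatrix(\hamCycle)\cycleParam\big)=\equilibriumWrench+\bm 0=\equilibriumWrench .
\]
So \eqref{eq:wrench_equilibrium} holds for every \(\cycleParam\). Tensility is exactly the stated hypothesis that the resulting distribution lies in \(\mathcal F_+\).

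Equivalently, because \(\im\nullBasis=\ker\grasp\supseteq\im\cycleNullMatrix(\hamCycle)\), there exists \(\internalParam\) with \(\nullBasis\internalParam=\cycleNullMatrix(\hamCycle)\cycleParam\). The distribution is therefore a member of the family in Proposition~\ref{prop:equilibrium_force_family}.

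The only step needing care is the structure of each column: I must confirm that the Kronecker and block-diagonal construction really places \(\pm\edgeDir{e_j^1}{e_j^2}\) at the two endpoints of \(e_j\). This relies on \(\hamCycle\) having no self-loops and on \(\anchorPoint{e_j^1}\neq\anchorPoint{e_j^2}\), so that the normalization in \eqref{eq:edge_unit_direction} is well defined. I would state the distinctness of attachment points as a standing assumption. The rest of the argument is routine linearity.
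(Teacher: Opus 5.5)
Your proposal is correct and takes the same route as the paper: the inclusion follows column by column from Lemma~\ref{lemma:edge_self_equilibration}, using that \(\ker\grasp\) is closed under linear combinations, and the equilibrium claim follows by linearity from \(\grasp\particularForce=\equilibriumWrench\), with tensility taken as a hypothesis. Your explicit check of the Kronecker/block-diagonal column structure and your remark linking the family to Proposition~\ref{prop:equilibrium_force_family} are accurate additions that the paper's shorter proof leaves implicit.
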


\begin{proof}
{Lemma~\ref{lemma:edge_self_equilibration} shows that every
column of \(\cycleNullMatrix(\hamCycle)\) belongs to
\(\ker\grasp\), which proves
\eqref{eq:cycle_subspace_in_kernel}. Equation
\eqref{eq:cycle_force_distribution} then follows from
\(\grasp\particularForce=\equilibriumWrench\).}
\end{proof}

Thus, \(\cycleNullMatrix(\hamCycle)\) parametrizes a cycle-induced internal-force subspace. The subsequent constructions use this subspace without requiring a full basis of the grasp kernel.

\begin{corollary}[Local two-direction representation]
\label{cor:local_internal_force_representation}
For the Hamiltonian cycle \(\hamCycle\), let \(h_i\) and \(h_i^+\)
denote the indices of the incoming and outgoing edges at vertex \(i\).
The corresponding signed versions of the direction arrays defined in
\eqref{eq:edge_unit_direction} are
\begin{equation}
  \coord{w}{\delta_i}
  \define
  \bm N_{H,i h_i}^{\mathsf w},
  \qquad
  \coord{w}{\bar{\delta}_i}
  \define
  \bm N_{H,i h_i^+}^{\mathsf w}.
\label{eq:local_cycle_directions}
\end{equation}
Then, the internal-force block associated with carrier \(i\) is
\begin{equation}
  \internalForceBlock{i}
  =
  \lambda_{h_i}\coord{w}{\delta_i}
  +
  \lambda_{h_i^+}\coord{w}{\bar{\delta}_i}.
\label{eq:time_var_force_Ni}
\end{equation}
\end{corollary}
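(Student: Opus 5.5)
The corollary follows by reading off the \(i\)-th block row of the product \(\cycleNullMatrix(\hamCycle)\cycleParam\) in \eqref{eq:cycle_force_distribution}, with \(\internalForceStack=\cycleNullMatrix(\hamCycle)\cycleParam\). The plan is to first determine the sparsity pattern of \(\cycleNullMatrix(\hamCycle)\) from \eqref{eq:choice_of_N}, and then show that exactly two blocks survive in each block row.

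First, I will write \(\cycleNullMatrix(\hamCycle)\) block by block. The Kronecker product \(\incidence\otimes\bm I_3\) has \((i,j)\) block \(E_{ij}\bm I_3\). The block-diagonal factor \(\operatorname{diag}(\edgeDir{e_1^1}{e_1^2},\ldots,\edgeDir{e_n^1}{e_n^2})\) is a \(3n\times n\) matrix whose \(j\)-th column is the edge direction placed in the \(j\)-th \(3\times1\) block. Their product therefore has \((i,j)\) block
\[
E_{ij}\,\edgeDir{e_j^1}{e_j^2}\in\mathbb{R}^3.
\]
Hence the \(i\)-th block row of \(\cycleNullMatrix(\hamCycle)\cycleParam\) equals \(\sum_{j=1}^n \lambda_j E_{ij}\edgeDir{e_j^1}{e_j^2}\).

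Second, I will use that \(\hamCycle\) is a Hamiltonian cycle. Every vertex \(i\) then lies on exactly two edges of the cycle: one incoming edge, with index \(h_i\), and one outgoing edge, with index \(h_i^+\). These two are distinct whenever \(n\ge2\). For \(n=2\) the two edges \((1,2)\) and \((2,1)\) are distinct in the directed graph \(\mathcal K_n\), even though they join the same pair of vertices.

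By definition of the incidence matrix, \(E_{ij}=0\) for every edge \(j\) not incident to \(i\), and \(E_{ij}=\pm1\) for \(j\in\{h_i,h_i^+\}\). The sum therefore collapses to two terms. The signed vectors \(E_{ih_i}\edgeDir{e_{h_i}^1}{e_{h_i}^2}\) and \(E_{ih_i^+}\edgeDir{e_{h_i^+}^1}{e_{h_i^+}^2}\) are exactly the blocks \(\bm N_{H,ih_i}^{\mathsf w}\) and \(\bm N_{H,ih_i^+}^{\mathsf w}\). By \eqref{eq:local_cycle_directions} these are \(\coord{w}{\delta_i}\) and \(\coord{w}{\bar\delta_i}\), which gives \eqref{eq:time_var_force_Ni}.

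The only delicate point is bookkeeping. I must check that the \(3n\times n\) "diag" of column vectors is interpreted as the block-diagonal matrix, so that the block indexing \(\bm N_{H,ij}\) used in \eqref{eq:local_cycle_directions} is consistent. I must also check that the sign convention of \(\incidence\) is absorbed into \(\delta_i,\bar\delta_i\), which is why they are called "signed versions". No property of the directions themselves is needed, so Lemma~\ref{lemma:edge_self_equilibration} is not required here, beyond the membership \(\internalForceStack\in\ker\grasp\) already given by Proposition~\ref{prop:cycle_internal_force_subspace}.
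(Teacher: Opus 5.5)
Your proposal is correct and follows the paper's proof. The paper simply observes that every vertex of \(\hamCycle\) has exactly one incoming and one outgoing edge, so by \eqref{eq:choice_of_N} the \(i\)-th block row of \(\cycleNullMatrix(\hamCycle)\) contains exactly the two nonzero blocks in \eqref{eq:local_cycle_directions}; your block-by-block computation spells out the bookkeeping the paper leaves implicit, and the only slip is cosmetic: you write \(E_{ij}\) for the entries of the incidence matrix \(\incidence\).
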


\begin{proof}
Every vertex of \(\hamCycle\) has one incoming and one outgoing edge.
Hence, by \eqref{eq:choice_of_N}, the \(i\)-th block row of
\(\cycleNullMatrix(\hamCycle)\) has exactly the two nonzero blocks in
\eqref{eq:local_cycle_directions}.
\end{proof}

\section{Radial Shaping of Carrier Trajectories}
\label{sec:radial_shaping}

{Consider a continuously differentiable tensile
equilibrium-preserving force trajectory
\(\cableForceStack(t)\in\mathcal F_+\), generated, for example, by
\eqref{eq:cycle_force_distribution}. It determines the cable-direction
trajectories through
\(
  \cableDir{i}
  =
  \cableForce{i}/\norm{\cableForce{i}}
\).
The cable lengths can then be selected independently to shape the
radial realization of these directions, subject to their admissible
bounds.}

\begin{prop}[Directional-position realization]
\label{prop:directional_position_realization}

{Let \(u\in\Vtrans\) be a fixed unit vector, and let
\(\coord{w}{u}\in\sphere{2}\) be its world-frame coordinate array.
Define}
\begin{equation}
{%
  s_i(t)
  \define
  \left(\coord{w}{u}\right)^\top
  \carrierPos{i}(t),
  \qquad
  q_{i,u}(t)
  \define
  \left(\coord{w}{u}\right)^\top
  \cableDir{i}(t).
}
\label{eq:directional_coordinates}
\end{equation}
{Let \(s_i^\star(t)\) be a continuously differentiable desired
profile satisfying
\(
  \abs{q_{i,u}(t)}\geq\varepsilon_i>0
\)
for all \(t\geq0\). Then \(s_i(t)=s_i^\star(t)\) is realized by}
\begin{equation}
{%
  \cableLength{i}(t)
  =
  \frac{
    s_i^\star(t)-s_{B_i}
  }{
    q_{i,u}(t)
  },
  \qquad
  s_{B_i}
  \define
  \left(\coord{w}{u}\right)^\top
  \left(
    \loadPos+\attachmentWorld{i}
  \right),
}
\label{eq:directional_length_realization}
\end{equation}
{provided that}
\begin{equation}
{%
  0<\cableLengthMin
  \leq
  \cableLength{i}(t)
  \leq
  \cableLengthMax
  <\infty.
}
\label{eq:directional_length_bounds}
\end{equation}
\end{prop}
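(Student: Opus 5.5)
The plan is a direct substitution argument built on the carrier-position relation \eqref{eq:carrier_position}. Project \eqref{eq:carrier_position} onto the fixed direction \(\coord{w}{u}\). Since \(\loadPos\) is constant, \(\attachmentWorld{i}=\loadRot\attachmentCoord{i}\), and \(\loadRot\) is constant (constant load pose), we obtain the scalar identity
\[
  s_i(t)=s_{B_i}+\cableLength{i}(t)\,q_{i,u}(t),
\]
where \(s_{B_i}\) is a constant. This identity holds for any admissible cable length. The directions \(\cableDir{i}(t)=\cableForce{i}/\norm{\cableForce{i}}\) are fixed in advance by the tensile force trajectory, so \(q_{i,u}(t)\) is a known scalar function. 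It is continuously differentiable because \(\cableForceStack\in\mathcal F_+\) is \(C^1\) and \(\norm{\cableForce{i}}>0\), which makes the normalization smooth.

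Next, solve the scalar identity for \(\cableLength{i}\) under the requirement \(s_i=s_i^\star\). The hypothesis \(\abs{q_{i,u}(t)}\ge\varepsilon_i>0\) guarantees that the division is well defined. It also keeps \(1/q_{i,u}\) bounded and \(C^1\), so the length law \eqref{eq:directional_length_realization} is a \(C^1\) function of time. Substituting it back into the projected identity gives \(s_i=s_i^\star\) identically. The proof of realization is then a one-line verification.

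It remains to check that the construction is physically consistent. The model \eqref{eq:carrier_position} requires \(\cableLength{i}(t)>0\) and taut cables; tautness holds because the tensions are positive, since the force trajectory lies in \(\mathcal F_+\). The length condition is exactly what \eqref{eq:directional_length_bounds} supplies, which is why it appears as a proviso rather than a conclusion.

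The main subtlety is this sign and feasibility issue. The formula can produce negative or unbounded lengths whenever \(s_i^\star-s_{B_i}\) and \(q_{i,u}\) have opposite signs, or when \(s_i^\star\) grows. These cases are not ruled out by the hypotheses; they are excluded by assuming \eqref{eq:directional_length_bounds}. I would remark that, because \(q_{i,u}\) is continuous and bounded away from zero, it has constant sign. Positivity therefore amounts to requiring that \(s_i^\star(t)-s_{B_i}\) keep the same sign as \(q_{i,u}\), with magnitude between \(\cableLengthMin\abs{q_{i,u}}\) and \(\cableLengthMax\abs{q_{i,u}}\).

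Finally, I would note the consequence for later use. With \(s_i\) prescribed, the full carrier position \(\carrierPos{i}\) is determined by \eqref{eq:carrier_position} with this \(\cableLength{i}\), and its velocity follows from \eqref{eq:carrier_velocity_static}.
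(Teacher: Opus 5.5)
Your proposal is correct and follows the paper's proof: project \eqref{eq:carrier_position} onto \(\coord{w}{u}\) to get \(s_i=s_{B_i}+\cableLength{i}q_{i,u}\), solve for \(\cableLength{i}\) using \(\abs{q_{i,u}}\geq\varepsilon_i>0\), and treat \eqref{eq:directional_length_bounds} as the admissibility proviso. Your further remarks are correct refinements that the paper leaves implicit: the \(C^1\) regularity of the length law, and the observation that \(q_{i,u}\) has constant sign, so positivity reduces to a sign and magnitude condition on \(s_i^\star-s_{B_i}\).
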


\begin{proof}
{Since the load pose is constant,
\(\loadPos+\attachmentWorld{i}\) is the world-frame position of
\(\anchorPoint{i}\). Projecting \eqref{eq:carrier_position} along
\(\coord{w}{u}\) yields}
\begin{equation}
{%
  s_i(t)
  =
  s_{B_i}
  +
  \cableLength{i}(t)q_{i,u}(t).
}
\label{eq:directional_position_identity}
\end{equation}
{Solving
\(s_i(t)=s_i^\star(t)\) for \(\cableLength{i}(t)\) gives
\eqref{eq:directional_length_realization}. The lower bound on
\(\abs{q_{i,u}}\) prevents singular length commands, while
\eqref{eq:directional_length_bounds} ensures their physical
admissibility.}
\end{proof}

\begin{corollary}[Fixed-length compatibility]
\label{cor:fixed_length_directional_compatibility}

{If cable \(i\) has constant length
\(\cableLength{i}(t)\equiv\ell_i\), then a prescribed directional
profile is feasible only if}
\begin{equation}
{%
  s_i^\star(t)-s_{B_i}
  =
  \ell_i q_{i,u}(t).
}
\label{eq:fixed_length_directional_compatibility}
\end{equation}
{Thus, fixed cable length constrains the directional profile to
that induced by the cable-direction trajectory, whereas variable
cable length permits its radial shaping through
\eqref{eq:directional_length_realization}.}
\end{corollary}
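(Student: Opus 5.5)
The corollary follows directly from the projected position identity \eqref{eq:directional_position_identity} established in the proof of Proposition~\ref{prop:directional_position_realization}. That identity was obtained only by projecting \eqref{eq:carrier_position} onto \(\coord{w}{u}\) under a constant load pose, so it holds for any cable-length trajectory, constant or not. I will therefore substitute \(\cableLength{i}(t)\equiv\ell_i\) into it, giving \(s_i(t)=s_{B_i}+\ell_i q_{i,u}(t)\) for all \(t\geq0\).

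Feasibility of the prescribed profile means \(s_i(t)=s_i^\star(t)\) for all \(t\). Combining this with the substituted identity yields \eqref{eq:fixed_length_directional_compatibility} as a necessary condition. Here \(q_{i,u}\) is fixed by the force trajectory via \(\cableDir{i}=\cableForce{i}/\norm{\cableForce{i}}\), and \(s_{B_i}\) is constant. The directional profile is therefore determined up to the single scalar \(\ell_i\): it equals an affine image of \(q_{i,u}\) and cannot be shaped independently.

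For the contrasting remark, I will invoke Proposition~\ref{prop:directional_position_realization}. Whenever \(\abs{q_{i,u}}\geq\varepsilon_i\) and the bounds \eqref{eq:directional_length_bounds} hold, the time-varying length \eqref{eq:directional_length_realization} realizes any continuously differentiable \(s_i^\star\). That length is constant exactly when \eqref{eq:fixed_length_directional_compatibility} holds with a constant ratio, which confirms consistency between the two cases.

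There is no substantive obstacle. The only care needed is to state that the condition is necessary (``only if'') and to note that the identity does not itself require \(\abs{q_{i,u}}\) to be bounded away from zero.
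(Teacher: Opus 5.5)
Your proposal is correct and takes essentially the same route as the paper. The paper's proof simply imposes $\ell_i(t)\equiv\ell_i$ in the projected identity \eqref{eq:directional_position_identity} and equates $s_i$ with $s_i^\star$; your extra consistency check with \eqref{eq:directional_length_realization} is a harmless addition that the paper does not make.
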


\begin{proof}
{The claim follows from
\eqref{eq:directional_position_identity} by imposing
\(\cableLength{i}(t)\equiv\ell_i\).}
\end{proof}

\subsection{Constant-Altitude Realization}

{A practically relevant specialization is obtained by selecting
\(u=z_{\mathsf w}\). Define the altitude of the \(i\)-th carrier and
its load-side attachment point as}
\begin{equation}
{%
  z_{A_i}(t)
  \define
  {\worldVertical}^\top\carrierPos{i}(t),
  \qquad
  z_{B_i}
  \define
  {\worldVertical}^\top
  \left(
    \loadPos+\attachmentWorld{i}
  \right),
}
\label{eq:carrier_anchor_altitudes}
\end{equation}
{and let}
$
{%
  q_{i,z}(t)
  \define
  {\worldVertical}^\top\cableDir{i}(t)
}$
{denote the vertical component of the cable direction.}

\begin{corollary}[Constant-altitude realization]
\label{cor:constant_altitude_realization}

{Let \(h_i\) be a desired constant carrier altitude. If}
\begin{equation}
{%
  \abs{q_{i,z}(t)}
  \geq
  \varepsilon_i
  >0,
  \qquad
  \forall t\geq0,
}
\label{eq:vertical_direction_nondegeneracy}
\end{equation}
{then \(z_{A_i}(t)=h_i\) is realized by}
\begin{equation}
{%
  \cableLength{i}(t)
  =
  \frac{
    h_i-z_{B_i}
  }{
    q_{i,z}(t)
  },
}
\label{eq:constant_altitude_length}
\end{equation}
{provided that the resulting length satisfies
\eqref{eq:directional_length_bounds}.}
\end{corollary}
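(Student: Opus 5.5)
The plan is to obtain the statement as a direct specialization of Proposition~\ref{prop:directional_position_realization}, taking the fixed unit vector to be \(u=z_{\mathsf w}\).

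First, I will check that every ingredient of that proposition matches the present notation. The world-frame coordinate array of \(z_{\mathsf w}\) is \(\worldVertical\), which is a unit vector in \(\sphere{2}\). With this choice, the directional coordinates in \eqref{eq:directional_coordinates} become \(s_i(t)=z_{A_i}(t)\) and \(q_{i,u}(t)=q_{i,z}(t)\). The constant \(s_{B_i}\) in \eqref{eq:directional_length_realization} becomes \(z_{B_i}\), as defined in \eqref{eq:carrier_anchor_altitudes}.

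Next, I will take the desired profile to be the constant \(s_i^\star(t)\equiv h_i\). A constant function is trivially continuously differentiable, so the regularity hypothesis holds. The nondegeneracy hypothesis \(\abs{q_{i,u}}\geq\varepsilon_i>0\) is exactly \eqref{eq:vertical_direction_nondegeneracy}. Proposition~\ref{prop:directional_position_realization} then yields the length law \eqref{eq:constant_altitude_length}, and it realizes \(z_{A_i}(t)=h_i\) provided the length bounds \eqref{eq:directional_length_bounds} hold, which is precisely the standing proviso of the corollary.

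For a self-contained check, one can also project \eqref{eq:carrier_position} onto \(\worldVertical\). This uses the constancy of \(\loadPos\) and \(\attachmentWorld{i}\) and gives \(z_{A_i}(t)=z_{B_i}+\cableLength{i}(t)q_{i,z}(t)\). Substituting \eqref{eq:constant_altitude_length} into this identity returns \(z_{A_i}(t)=h_i\).

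There is no real obstacle. The only point worth a remark is that the bound on \(q_{i,z}\) makes the quotient well defined and bounded. Positivity of \(\cableLength{i}\) additionally requires \(h_i-z_{B_i}\) to share the sign of \(q_{i,z}\); this is already included in the assumed bounds \eqref{eq:directional_length_bounds}.
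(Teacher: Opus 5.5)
Your proposal is correct and takes the same approach as the paper, whose proof is exactly the specialization of Proposition~\ref{prop:directional_position_realization} with \(u=z_{\mathsf w}\) and \(s_i^\star(t)\equiv h_i\). Your direct projection check and your remark on the sign of \(h_i-z_{B_i}\) are accurate, though the proof does not need them.
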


\begin{proof}
{Apply Proposition~\ref{prop:directional_position_realization}
with \(u=z_{\mathsf w}\) and
\(s_i^\star(t)\equiv h_i\).}
\end{proof}

{In the fixed-length case,
\eqref{eq:fixed_length_directional_compatibility} reduces to
\(
  h_i-z_{B_i}
  =
  \ell_iq_{i,z}(t)
\).
Therefore, constant-altitude motion with fixed cable length requires
\(q_{i,z}(t)\) to be constant. Geometrically, the cable direction is
then restricted to a constant-latitude circle on \(\sphere{2}\).
Variable cable length removes this restriction by keeping the product
\(\cableLength{i}(t)q_{i,z}(t)\) constant while both factors may vary.}

\begin{prop}[Stopping condition under constant altitude]
\label{prop:constant_altitude_stopping_condition}

{Suppose that the constant-altitude length law
\eqref{eq:constant_altitude_length} satisfies
\eqref{eq:vertical_direction_nondegeneracy} and the admissible length
bounds. Then}
\begin{equation}
{%
  \dcarrierPos{i}(t)=\bm 0
  \quad\Longleftrightarrow\quad
  \dcableDir{i}(t)=\bm 0.
}
\label{eq:constant_altitude_stopping_equivalence}
\end{equation}
\end{prop}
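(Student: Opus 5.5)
The plan is to start from the velocity decomposition \eqref{eq:carrier_velocity_static} and its orthogonal form \eqref{eq:carrier_speed_decomposition}, and to show that under the length law \eqref{eq:constant_altitude_length} the radial rate $\dcableLength{i}$ is itself proportional to a component of $\dcableDir{i}$. Since the load pose is constant, \eqref{eq:carrier_speed_decomposition} gives $\norm{\dcarrierPos{i}}^2=\cableLength{i}^{\,2}\norm{\dcableDir{i}}^2+\dcableLength{i}^{\,2}$. Both terms are nonnegative, and $\cableLength{i}\geq\cableLengthMin>0$. Hence $\dcarrierPos{i}=\bm 0$ holds if and only if $\dcableDir{i}=\bm 0$ and $\dcableLength{i}=0$ hold simultaneously.

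For the direction ``$\Leftarrow$'', I would differentiate \eqref{eq:constant_altitude_length}. The numerator $h_i-z_{B_i}$ is constant because the load pose is constant. Moreover, $q_{i,z}$ is differentiable and bounded away from zero by \eqref{eq:vertical_direction_nondegeneracy}. This gives
\begin{equation*}
\dcableLength{i}=-\frac{h_i-z_{B_i}}{q_{i,z}^2}\,\dot q_{i,z}=-\frac{\cableLength{i}}{q_{i,z}}\,{\worldVertical}^\top\dcableDir{i}.
\end{equation*}
If $\dcableDir{i}=\bm 0$, then $\dot q_{i,z}=0$, so $\dcableLength{i}=0$. Both terms of \eqref{eq:carrier_speed_decomposition} then vanish, and $\dcarrierPos{i}=\bm 0$.

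For the direction ``$\Rightarrow$'', I would simply use the orthogonal decomposition. If $\dcarrierPos{i}=\bm 0$, then $\cableLength{i}^{\,2}\norm{\dcableDir{i}}^2=0$, and since $\cableLength{i}>0$ this forces $\dcableDir{i}=\bm 0$. As an alternative check, one can use the identity $z_{A_i}\equiv h_i$, which gives ${\worldVertical}^\top\dcarrierPos{i}=0$. This is consistent with the formula above, but it is not needed.

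The only delicate point is the regularity required to differentiate the length law. I will rely on the standing assumption that the force trajectory is $C^1$ with tensions strictly positive. Under that assumption $\cableDir{i}=\cableForce{i}/\norm{\cableForce{i}}$ is $C^1$, and the quotient in \eqref{eq:constant_altitude_length} is $C^1$ because of \eqref{eq:vertical_direction_nondegeneracy}. I will state this explicitly in the proof. Beyond that, I expect no real obstacle, since the key observation is just that $\dcableLength{i}$ is a linear function of $\dcableDir{i}$.
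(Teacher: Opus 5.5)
Your proof is correct and is essentially the paper's argument: you derive the same length-rate relation $\dcableLength{i}=-\cableLength{i}\,\dot q_{i,z}/q_{i,z}$ for the ``$\Leftarrow$'' direction, and for ``$\Rightarrow$'' you use the orthogonality of $\cableDir{i}$ and $\dcableDir{i}$ through the norm identity \eqref{eq:carrier_speed_decomposition}, while the paper uses the same orthogonality to rule out $\dcableDir{i}$ being a nonzero multiple of $\cableDir{i}$. You also note that the converse direction does not need the length law, and that observation is correct.
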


\begin{proof}
{Constant altitude implies}
${%
  \cableLength{i}(t)q_{i,z}(t)
  =
  h_i-z_{B_i}.
}$
Differentiating it and using
\(q_{i,z}(t)\neq0\) gives
\begin{equation}
{%
  \dcableLength{i}
  =
  -
  \cableLength{i}
  \frac{
    \dot q_{i,z}
  }{
    q_{i,z}
  }.
}
\label{eq:constant_altitude_length_rate}
\end{equation}
{Substitution into \eqref{eq:carrier_velocity_static} yields}
\begin{equation}
{%
  \dcarrierPos{i}
  =
  \cableLength{i}
  \left(
    \dcableDir{i}
    -
    \frac{
      \dot q_{i,z}
    }{
      q_{i,z}
    }
    \cableDir{i}
  \right).
}
\label{eq:constant_altitude_velocity}
\end{equation}
{If \(\dcableDir{i}=\bm0\), then
\(\dot q_{i,z}=0\), and
\eqref{eq:constant_altitude_velocity} gives
\(\dcarrierPos{i}=\bm0\).
Conversely, if \(\dcarrierPos{i}=\bm0\), then
\eqref{eq:constant_altitude_velocity} implies that
\(\dcableDir{i}\) is parallel to \(\cableDir{i}\).
Since
\(
  {\cableDir{i}}^{\top}\dcableDir{i}=0
\),
this is possible only if
\(\dcableDir{i}=\bm0\).}
\end{proof}

\begin{corollary}[Level non-stop condition]
\label{cor:level_nonstop_condition}

{Let the cable-direction and induced cable-length trajectories
be continuous and periodic and satisfy the assumptions of
Proposition~\ref{prop:constant_altitude_stopping_condition}. A
constant-altitude carrier trajectory is non-stop if and only if}
\begin{equation}
{%
  \dcableDir{i}(t)\neq\bm0,
  \qquad
  \forall t\geq0.
}
\label{eq:level_nonstop_direction_condition}
\end{equation}
\end{corollary}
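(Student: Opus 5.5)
The plan is to combine the pointwise equivalence of Proposition~\ref{prop:constant_altitude_stopping_condition} with a compactness argument that uses periodicity. The definition of non-stop requires a uniform lower bound \(\speedLowerBound>0\), not just pointwise nonvanishing. The nontrivial direction is therefore ``if'': pointwise nonvanishing of \(\dcableDir{i}\) must be upgraded to a uniform positive bound on \(\norm{\dcarrierPos{i}}\).

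\emph{Necessity.} If the trajectory is non-stop, then \(\norm{\dcarrierPos{i}(t)}\geq\speedLowerBound>0\) for all \(t\geq0\). In particular \(\dcarrierPos{i}(t)\neq\bm0\), and Proposition~\ref{prop:constant_altitude_stopping_condition} gives \(\dcableDir{i}(t)\neq\bm0\) for all \(t\). This is just the contrapositive of the ``\(\Leftarrow\)'' implication in \eqref{eq:constant_altitude_stopping_equivalence}.

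\emph{Sufficiency.} Assume \eqref{eq:level_nonstop_direction_condition}, so that by Proposition~\ref{prop:constant_altitude_stopping_condition} we have \(\dcarrierPos{i}(t)\neq\bm0\) for every \(t\). For the norm, I would use \eqref{eq:carrier_speed_decomposition} together with \eqref{eq:constant_altitude_length_rate}:
\[
\norm{\dcarrierPos{i}}^2=\cableLength{i}^{2}\Big(\norm{\dcableDir{i}}^2+\tfrac{\dot q_{i,z}^2}{q_{i,z}^2}\Big)\;\geq\;\cableLengthMin^{2}\norm{\dcableDir{i}}^2 .
\]
It then suffices to bound \(\norm{\dcableDir{i}}\) away from zero uniformly. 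Here I will assume that the periodic direction trajectory is continuously differentiable, as in the setting of Section~\ref{sec:radial_shaping}, where the force trajectory is \(C^1\) and \(\cableDir{i}=\cableForce{i}/\norm{\cableForce{i}}\) with \(\norm{\cableForce{i}}>0\). Under that assumption \(\dcableDir{i}\) is continuous and periodic with some period \(T_i>0\). The map \(t\mapsto\norm{\dcableDir{i}(t)}\) is continuous and positive on the compact set \([0,T_i]\), so it attains a positive minimum \(m_i\). By periodicity, \(\norm{\dcableDir{i}(t)}\geq m_i\) for all \(t\geq0\). Setting \(\speedLowerBound\define\cableLengthMin\min_i m_i>0\) gives the uniform bound over all carriers, since \(n\) is finite.

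The main obstacle is exactly this passage from pointwise to uniform positivity. It relies on two hypotheses: periodicity, which provides a compact fundamental domain, and continuity of \(\dcableDir{i}\), not merely of \(\cableDir{i}\). I would state explicitly that the \(C^1\) regularity of the direction trajectory is inherited from the \(C^1\) force trajectory. If it were not, one would need to note that differentiability is implicit in writing \(\dcableDir{i}\). The lower length bound \(\cableLengthMin>0\) from \eqref{eq:directional_length_bounds} is what transfers the bound from \(\dcableDir{i}\) to \(\dcarrierPos{i}\). Alternatively, one can apply the compactness argument directly to the continuous periodic function \(\norm{\dcarrierPos{i}}\), which is positive by Proposition~\ref{prop:constant_altitude_stopping_condition}.
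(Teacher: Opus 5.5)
Your proposal is correct and follows the paper's argument: necessity is the contrapositive of one direction of Proposition~\ref{prop:constant_altitude_stopping_condition}, and sufficiency turns pointwise positivity into a uniform bound by taking the minimum of a continuous periodic function over one period. Your main variant differs only in where that minimum is taken: you use \(\norm{\dcarrierPos{i}}\geq\cableLengthMin\norm{\dcableDir{i}}\) and minimize \(\norm{\dcableDir{i}}\), whereas the paper (and your closing alternative) minimizes the carrier speed directly; it also makes explicit the \(C^1\) regularity of the direction trajectory, which the paper uses tacitly.
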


\begin{proof}
{By Proposition~\ref{prop:constant_altitude_stopping_condition},
the carrier speed vanishes exactly when the cable-direction rate
vanishes. Hence, \eqref{eq:level_nonstop_direction_condition} implies
that the continuous and periodic carrier speed is positive over one
period. It therefore attains a strictly positive minimum, which is the
required uniform speed lower bound. The converse follows directly
from \eqref{eq:constant_altitude_stopping_equivalence}.}
\end{proof}

\section{Structural Existence Results}
\label{sec:structural_existence}

\begin{definition}[Admissible non-stop motion]
\label{def:admissible_nonstop_motion}
{An equilibrium-preserving carrier motion is called
\emph{admissible and non-stop} if the cable forces remain in
\(\mathcal F_+\), the cable lengths satisfy
$  0<\cableLengthMin
  \leq
  \cableLength{i}(t)
  \leq
  \cableLengthMax
  <\infty$ for $
  i=1,\ldots,n,
$
and there exists a common constant \(\speedLowerBound>0\) such that
\[
  \norm{\dcarrierPos{i}(t)}
  \geq
  \speedLowerBound,
  \qquad
  \forall t\geq0,
  \quad
  i=1,\ldots,n.
\]
It is called \emph{level} if every carrier additionally satisfies
\(z_{A_i}(t)=h_i\) for some constant altitude \(h_i\).}
\end{definition}

\subsection{Unrestricted Non-stop Motion}

\begin{prop}[One-carrier impossibility]
\label{prop:one_carrier_impossibility}
{An admissible non-stop motion does not exist for \(n=1\).}
\end{prop}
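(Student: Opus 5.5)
The plan is to show that for \(n=1\) the equilibrium constraint freezes the cable direction completely, so that the carrier can move only radially through winch actuation. A bounded cable length is then incompatible with a speed bounded away from zero.

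\textbf{Step 1: the grasp map is injective.} For \(n=1\) the grasp matrix \eqref{eq:grasp_matrix} reduces to \(\grasp=\left[\begin{smallmatrix}\bm I_3\\ [\attachmentWorld{1}]_\times\end{smallmatrix}\right]\). This matrix is injective because of its identity block, so \(\ker\grasp=\{\bm 0\}\). By Proposition~\ref{prop:equilibrium_force_family}, the only force distribution producing \(\equilibriumWrench\) is therefore the particular solution. Reading off the first block of \eqref{eq:wrench_equilibrium} gives \(\cableForce{1}=\loadMass\gravityScalar\worldVertical\). Feasibility, \(\equilibriumWrench\in\grasp\mathcal F_+\), additionally forces \(\attachmentWorld{1}\times\worldVertical=\bm 0\), which we will not need. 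Hence \(\cableTension{1}=\loadMass\gravityScalar\) and \(\cableDir{1}\equiv\worldVertical\) are constant for all \(t\), so \(\dcableDir{1}\equiv\bm 0\).

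\textbf{Step 2: the motion is purely radial.} Substituting \(\dcableDir{1}=\bm 0\) into \eqref{eq:carrier_velocity_static} gives \(\dcarrierPos{1}=\dcableLength{1}\worldVertical\), and by \eqref{eq:carrier_speed_decomposition} the carrier speed is \(\norm{\dcarrierPos{1}}=\abs{\dcableLength{1}}\). The non-stop requirement of Definition~\ref{def:admissible_nonstop_motion} thus becomes \(\abs{\dcableLength{1}(t)}\geq\speedLowerBound>0\) for all \(t\geq0\).

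\textbf{Step 3: contradiction with the length bounds.} This is the only delicate step. It relies on the standing regularity of the trajectories: \(\cableLength{1}\) is continuously differentiable, which is implicit in \eqref{eq:carrier_velocity_static} and in the \(C^1\) setting of Section~\ref{sec:radial_shaping}. Because \(\dcableLength{1}\) is then continuous and never enters the interval \((-\speedLowerBound,\speedLowerBound)\), the intermediate value theorem forces it to keep a constant sign. Integrating gives
\[
\abs{\cableLength{1}(t)-\cableLength{1}(0)}\geq\speedLowerBound t\to\infty .
\]
This contradicts \(\cableLengthMin\leq\cableLength{1}(t)\leq\cableLengthMax<\infty\), so no admissible non-stop motion exists for \(n=1\).

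If only a differentiable (not \(C^1\)) length profile is assumed, the same sign argument still goes through by Darboux's theorem for derivatives, so the conclusion is robust.
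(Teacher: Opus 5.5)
Your proof is correct and follows the same route as the paper. Uniqueness of the single cable force freezes the cable direction, so the carrier speed reduces to $\abs{\dcableLength{1}}$; a continuous rate bounded away from zero must then keep a constant sign and eventually drive $\cableLength{1}$ outside its bounds. Your explicit identification $\cableForce{1}=\loadMass\gravityScalar\worldVertical$ and your IVT/Darboux justification of the constant sign only make the paper's steps more explicit. In the merely differentiable variant, use the mean value theorem rather than integration for the final estimate.
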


\begin{proof}
With a single cable and a constant load pose, the translational
equilibrium uniquely determines the cable force. Hence, the cable
direction is constant, i.e.,
{\(\dcableDir{1}=\bm0\)}.
Equation~\eqref{eq:carrier_velocity_static} therefore gives $%
  \dcarrierPos{1}
  =
  \dcableLength{1}\cableDir{1},$ $
  \norm{\dcarrierPos{1}}
  =
  \abs{\dcableLength{1}}.
$

If the motion were non-stop, then
\(\abs{\dcableLength{1}(t)}\geq\speedLowerBound>0\) for every
\(t\geq0\). Since \(\dcableLength{1}\) is continuous and never
vanishes, it has constant sign. Thus, \(\cableLength{1}(t)\) either
increases or decreases at a rate whose magnitude is bounded away from
zero, eventually violating its upper or positive lower bound.
\end{proof}

{Unlike the constant-length case, cable-length actuation can
temporarily move a single carrier while keeping the load still.
Proposition~\ref{prop:one_carrier_impossibility} shows that such motion
cannot remain both non-stop and bounded for all time.}

\subsubsection{Two carriers}

For \(n=2\), assume
{\(\anchorPoint{1}\neq\anchorPoint{2}\)}
and define the world-frame unit direction
\begin{equation}
{%
  \edgeDir{1}{2}
  \define
  \norm{\attachmentCoord{1}-\attachmentCoord{2}}^{-1}
  \loadRot
  \big(
    \attachmentCoord{1}-\attachmentCoord{2}
  \big).
}
\label{eq:internal_direction_n2}
\end{equation}
The one-dimensional internal-force space admits the representation
\begin{equation}
{%
  \cableForce{1}
  =
  \particularForceBlock{1}
  +
  \lambda\edgeDir{1}{2},
  \qquad
  \cableForce{2}
  =
  \particularForceBlock{2}
  -
  \lambda\edgeDir{1}{2}.
}
\label{eq:forces_n2}
\end{equation}
We impose the nondegeneracy condition
\begin{equation}
{%
  \particularForceBlock{i}
  \notin
  \operatorname{span}\{\edgeDir{1}{2}\},
  \qquad
  i=1,2.
}
\label{eq:assumption_n2}
\end{equation}

\begin{lemma}[Two-carrier direction motion]
\label{lemma:two_carrier_direction_motion}
Under \eqref{eq:assumption_n2},
\begin{equation}
{%
  \dcableDir{i}(t)=\bm0
  \quad\Longleftrightarrow\quad
  \dot\lambda(t)=0,
  \qquad
  i=1,2.
}
\label{eq:qdot_lambda_equivalence_n2}
\end{equation}
\end{lemma}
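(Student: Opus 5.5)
The plan is to differentiate the normalized force directly. Since \(\cableDir{i}=\cableForce{i}/\norm{\cableForce{i}}\) and \(\cableForce{i}\neq\bm0\) on \(\mathcal F_+\), we have
\[
\dcableDir{i}=\frac{1}{\norm{\cableForce{i}}}\Big(\bm I_3-\cableDir{i}{\cableDir{i}}^\top\Big)\dcableForce{i}.
\]
This is the orthogonal component \((\dcableForce{i})^\perp/\cableTension{i}\) from \eqref{eq:fidot1_base}. The particular force \(\particularForce\) is constant, so by \eqref{eq:forces_n2} we get \(\dcableForce{1}=\dot\lambda\,\edgeDir{1}{2}\) and \(\dcableForce{2}=-\dot\lambda\,\edgeDir{1}{2}\). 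Hence
\[
\dcableDir{i}=\pm\frac{\dot\lambda}{\norm{\cableForce{i}}}\Big(\bm I_3-\cableDir{i}{\cableDir{i}}^\top\Big)\edgeDir{1}{2}.
\]
The direction \(\dot\lambda=0\Rightarrow\dcableDir{i}=\bm0\) is then immediate.

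For the converse, it suffices to show that the projected vector \((\bm I_3-\cableDir{i}{\cableDir{i}}^\top)\edgeDir{1}{2}\) never vanishes. It vanishes exactly when \(\edgeDir{1}{2}\) is parallel to \(\cableDir{i}\), equivalently to \(\cableForce{i}\).

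The main step is to rule this out using \eqref{eq:assumption_n2}. Suppose \(\cableForce{i}=\particularForceBlock{i}\pm\lambda\edgeDir{1}{2}=\mu\,\edgeDir{1}{2}\) for some scalar \(\mu\). Then \(\particularForceBlock{i}=(\mu\mp\lambda)\edgeDir{1}{2}\in\operatorname{span}\{\edgeDir{1}{2}\}\), which contradicts \eqref{eq:assumption_n2}. So the projection is nonzero for every \(\lambda\), and \(\dcableDir{i}=\bm0\) forces \(\dot\lambda=0\).

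Two details need noting along the way. First, \(\particularForceBlock{i}\) is constant, which holds because a constant particular solution is chosen, consistent with Proposition~\ref{prop:equilibrium_force_family}. Second, \(\lambda(t)\) is assumed differentiable, so that the force trajectory is \(C^1\) as in Section~\ref{sec:radial_shaping}. The argument is the same for \(i=1,2\), differing only in sign.
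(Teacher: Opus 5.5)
Your proposal is correct and follows essentially the same route as the paper. Both express \(\dcableDir{i}\) as \(\pm\dot\lambda\,\norm{\cableForce{i}}^{-1}(\bm I_3-\cableDir{i}{\cableDir{i}}^\top)\edgeDir{1}{2}\) (the paper via \(\partial\cableDir{i}/\partial\lambda\) and the chain rule, you via direct time differentiation), and both conclude because \(\cableForce{i}\) can never be parallel to \(\edgeDir{1}{2}\) under \eqref{eq:assumption_n2}; your contradiction step is just a rephrasing of the paper's remark that the component of \(\cableForce{i}\) orthogonal to \(\edgeDir{1}{2}\) equals that of \(\particularForceBlock{i}\).
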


\begin{proof}
Since the cable is tensile,
\(
  \cableDir{i}
  =
  \cableForce{i}/\norm{\cableForce{i}}
\).
Differentiating \eqref{eq:forces_n2} with respect to \(\lambda\)
gives
\begin{equation}
  \frac{\partial\cableDir{i}}{\partial\lambda}
  =
\norm{\cableForce{i}}^{-1}
    \big(
      \bm I_3-\cableDir{i}{\cableDir{i}}^\top
    \big)
    (\pm\edgeDir{1}{2}).
\label{eq:dq_dlambda_n2}
\end{equation}
By \eqref{eq:forces_n2}, the component of \(\cableForce{i}\)
orthogonal to \(\edgeDir{1}{2}\) equals that of
\(\particularForceBlock{i}\), which is nonzero by
\eqref{eq:assumption_n2}. Thus, \(\cableDir{i}\) is never parallel to
\(\edgeDir{1}{2}\), which is therefore not annihilated by the
projector in \eqref{eq:dq_dlambda_n2}. Hence,
\(
  \partial\cableDir{i}/\partial\lambda\neq\bm0
\)
for every admissible \(\lambda\). Since
\(
  \dcableDir{i}
  =
  (\partial\cableDir{i}/\partial\lambda)\dot{\lambda}
\),
\eqref{eq:qdot_lambda_equivalence_n2} follows.
\end{proof}

\begin{prop}[Unrestricted non-stop motion with two carriers]
\label{prop:two_carrier_unrestricted_motion}
Under \eqref{eq:assumption_n2}, an admissible non-stop motion exists
for \(n=2\).
\end{prop}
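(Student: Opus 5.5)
Since the feasible set of internal-force parameters is open and the cable-direction map is smooth there, the plan is to build one explicit periodic motion for \(n=2\). In it, \(\lambda\) oscillates, so that the tangential speed is nonzero away from the turning points. The cable lengths are then chosen so that the radial speed takes over exactly at those turning points.

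The first step is to fix the admissible interval of \(\lambda\). Since \(\particularForce\in\mathcal F_+\), the set \(\Lambda\define\{\lambda : \particularForceBlock{1}+\lambda\edgeDir{1}{2}\neq\bm0,\ \particularForceBlock{2}-\lambda\edgeDir{1}{2}\neq\bm0\}\) contains \(\lambda=0\). In fact \eqref{eq:assumption_n2} gives \(\Lambda=\mathbb R\), since the orthogonal component of each force never vanishes. I then choose \(\lambda(t)=a\sin(\omega t)\) with \(a>0\). This makes \(\cableForce{i}(t)\) smooth, periodic and tensile, so \(\cableDir{i}(t)\) is also smooth and periodic. Equation \eqref{eq:forces_n2} guarantees that the wrench is preserved.

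The second step is the choice of cable lengths. I take \(\cableLength{i}(t)=\bar\ell_i+b\cos(\omega t)\) with \(\bar\ell_i>b>0\), so that \(\cableLengthMin=\bar\ell_i-b>0\) and \(\cableLengthMax=\bar\ell_i+b<\infty\). By \eqref{eq:carrier_speed_decomposition},
\[
\norm{\dcarrierPos{i}}^2=\cableLength{i}^{\,2}\norm{\partial\cableDir{i}/\partial\lambda}^2 a^2\omega^2\cos^2(\omega t)+b^2\omega^2\sin^2(\omega t).
\]
Lemma~\ref{lemma:two_carrier_direction_motion}, more precisely its proof, shows that \(\partial\cableDir{i}/\partial\lambda\neq\bm0\) for every \(\lambda\). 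On the compact set \([-a,a]\) this quantity is continuous, so it attains a minimum \(c_i>0\). Hence
\[
\norm{\dcarrierPos{i}}^2\geq \omega^2\min\{(\bar\ell_i-b)^2c_i^2a^2,\ b^2\}\big(\cos^2+\sin^2\big)>0.
\]
This gives a uniform speed lower bound \(\speedLowerBound\) that is common to both carriers once we take the minimum over \(i\). All the conditions of Definition~\ref{def:admissible_nonstop_motion} are then met.

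The main obstacle, though a mild one, is the uniform positivity of \(\norm{\partial\cableDir{i}/\partial\lambda}\). The lemma gives nonvanishing only pointwise, so I rely on compactness of \([-a,a]\) and continuity of \eqref{eq:dq_dlambda_n2}, which holds because \(\norm{\cableForce{i}}\) is bounded away from zero on that set. The other point to handle is that the tangential and radial terms must not vanish simultaneously. The quadrature phase (\(\sin\) versus \(\cos\)) is chosen precisely so that the zeros of \(\dot\lambda\) coincide with the maxima of \(\abs{\dcableLength{i}}\). The remark after Proposition~\ref{prop:one_carrier_impossibility} explains why the radial term alone cannot suffice.
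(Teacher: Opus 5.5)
Your proposal is correct and is essentially the paper's proof: both drive \(\lambda\) and \(\ell_i\) with sinusoids in quadrature so that \(\dot\lambda\) and \(\dot\ell_i\) never vanish together, and then invoke Lemma~\ref{lemma:two_carrier_direction_motion} together with the orthogonal speed decomposition \eqref{eq:carrier_speed_decomposition}. Your two additions are valid refinements: \eqref{eq:assumption_n2} makes every \(\lambda\in\mathbb R\) tensile-admissible, and compactness of \([-a,a]\) gives an explicit speed bound in place of the paper's continuity-and-periodicity argument (since the length bounds are common to all cables, take \(\underline\ell=\min_i(\bar\ell_i-b)\) and \(\overline\ell=\max_i(\bar\ell_i+b)\)).
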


\begin{proof}
Choose
\begin{equation}
{%
  \lambda(t)
  =
  \lambda_0+a\cos(\xi t),
  \qquad
  \cableLength{i}(t)
  =
  \ell_{i0}+b_i\sin(\xi t),
}
\label{eq:two_carrier_unrestricted_trajectory}
\end{equation}
where \(a\neq0\), \(b_i\neq0\), and \(\xi>0\).
The constants are selected so that the resulting force distribution
remains in \(\mathcal F_+\) and the cable lengths remain strictly
within their admissible bounds.

The derivatives
$
  \dot\lambda(t)
  =
  -a\xi\sin(\xi t),
  \dcableLength{i}(t)
  =
  b_i\xi\cos(\xi t)
$
never vanish simultaneously. Therefore, by
Lemma~\ref{lemma:two_carrier_direction_motion}, neither do
\(\dcableDir{i}\) and \(\dcableLength{i}\). Equation~\eqref{eq:carrier_speed_decomposition} gives
\[
  \norm{\dcarrierPos{i}(t)}^2
  =
  \cableLength{i}(t)^2\norm{\dcableDir{i}(t)}^2
  +
  \dcableLength{i}(t)^2
  >0.
\]
The resulting speeds are continuous and periodic. Their smallest
minimum over one period and over the two carriers therefore provides
a common positive speed lower bound.
\end{proof}

\begin{corollary}[Minimum number for unrestricted motion]
\label{cor:minimum_unrestricted_carriers}
Under the tensile-feasibility and nondegeneracy
conditions, an admissible non-stop motion exists if and only if
\(n\geq2\).
\end{corollary}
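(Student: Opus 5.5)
The corollary is a biconditional, so I will prove the two directions separately, using only results already established.

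\emph{Necessity.} Suppose an admissible non-stop motion exists. Then \(n\neq1\) by Proposition~\ref{prop:one_carrier_impossibility}. Since \(n\) is a positive integer, this gives \(n\geq2\).

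\emph{Sufficiency for \(n=2\).} This case is Proposition~\ref{prop:two_carrier_unrestricted_motion}, which applies under the nondegeneracy condition~\eqref{eq:assumption_n2}.

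\emph{Sufficiency for \(n\geq3\).} I would reduce to the two-carrier construction. Pick two carriers, say \(1\) and \(2\), with \(\anchorPoint{1}\neq\anchorPoint{2}\).

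\begin{itemize}
\item \emph{Internal force.} Use the single self-equilibrated edge force \(\lambda(t)\edgeDir{1}{2}\) on carrier \(1\) and \(-\lambda(t)\edgeDir{1}{2}\) on carrier \(2\). By Lemma~\ref{lemma:edge_self_equilibration} it lies in \(\ker\grasp\). It is also the cycle-induced force of Proposition~\ref{prop:cycle_internal_force_subspace} with only the edge \((1,2)\) parameter active, assuming the Hamiltonian cycle is chosen to contain that edge.
\item \emph{Carriers 1 and 2.} Take \(\lambda(t)\) and \(\cableLength{1},\cableLength{2}\) as in~\eqref{eq:two_carrier_unrestricted_trajectory}, with \(|a|\) small so that the forces stay in \(\mathcal F_+\).
\item \emph{Carriers \(i\geq3\).} Keep \(\cableForce{i}=\particularForceBlock{i}\) constant, so \(\dcableDir{i}=\bm0\). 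Assign \(\cableLength{i}(t)=\ell_{i0}+b_i\sin(\xi t)\) with \(b_i\neq0\) small enough to respect the length bounds.
\end{itemize}

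The speed bound then follows from~\eqref{eq:carrier_speed_decomposition}. For \(i\geq3\), \(\norm{\dcarrierPos{i}}=|\dcableLength{i}|=|b_i\xi\cos(\xi t)|\), and this vanishes at isolated times. I expect this to be the main obstacle.

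To fix it, I would make every carrier's direction vary as well, so that the tangential term covers the zeros of \(\dcableLength{i}\). Concretely, I would use all \(n\) edge parameters of a Hamiltonian cycle, \(\lambda_j(t)=\lambda_{j0}+a_j\cos(\xi t)\), with \(|a_j|\) small for tensility. By Corollary~\ref{cor:local_internal_force_representation}, each carrier's force then varies by \(\dot\lambda_{h_i}\coord{w}{\delta_i}+\dot\lambda_{h_i^+}\coord{w}{\bar\delta_i}\).

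\begin{itemize}
\item \emph{Direction rate.} I want \(\dcableDir{i}=\bm0\) exactly when \(\sin(\xi t)=0\). This needs the combination \(a_{h_i}\coord{w}{\delta_i}+a_{h_i^+}\coord{w}{\bar\delta_i}\) to be nonzero and not parallel to \(\cableDir{i}\). The argument is the same as in Lemma~\ref{lemma:two_carrier_direction_motion}, with the nondegeneracy condition~\eqref{eq:assumption_n2} generalized to this combined direction. Generic choice of the amplitudes \(a_j\) should achieve it.
\item \emph{Length rate.} With \(\cableLength{i}=\ell_{i0}+b_i\sin(\xi t)\), the rate \(\dcableLength{i}\) vanishes only when \(\cos(\xi t)=0\).
\end{itemize}

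The two rates therefore never vanish simultaneously, and \(\norm{\dcarrierPos{i}}>0\) for all \(t\). Continuity and periodicity then give a common lower bound \(\speedLowerBound>0\), exactly as in the proof of Proposition~\ref{prop:two_carrier_unrestricted_motion}.

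The statement's phrase ``under the nondegeneracy conditions'' is read here as including this generalized non-parallelism assumption.
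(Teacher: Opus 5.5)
Necessity and the case $n=2$ match the paper exactly. For $n\geq3$ your route is different, and it is also valid.

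The paper constructs nothing new for $n\geq3$. It cites the fixed-length non-stop construction from prior work: a cycle-based force motion with $\dcableDir{i}(t)\neq\bm0$ for all $t$ together with $\dcableLength{i}\equiv0$, so that $\norm{\dcarrierPos{i}}=\cableLength{i}\norm{\dcableDir{i}}$ never vanishes. The first half of the proof of Theorem~\ref{thm:level_nonstop_motion} produces such a direction motion under \eqref{eq:assumption_n3}.

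You instead transplant the two-carrier mechanism. In-phase oscillation of all cycle parameters makes each $\cableForce{i}$ move along the fixed direction $\bm{c}_i=a_{h_i}\coord{w}{\delta_i}+a_{h_i^+}\coord{w}{\bar{\delta}_i}$. Hence $\dcableDir{i}$ vanishes exactly where $\sin(\xi t)=0$, and the quadrature length oscillation covers those instants.

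What your version buys:
\begin{itemize}
\item It is self-contained and needs only the per-carrier condition $\bm{c}_i\neq\bm0$, $\particularForceBlock{i}\notin\operatorname{span}\{\bm{c}_i\}$.
\item This condition follows from \eqref{eq:assumption_n3} whenever $(a_{h_i},a_{h_i^+})\neq(0,0)$. Otherwise it holds for generic amplitudes, unless $\coord{w}{\delta_i}$, $\coord{w}{\bar{\delta}_i}$ and $\particularForceBlock{i}$ are all parallel.
\item The Hamiltonian cycle is inessential. Any fixed $\bm{v}\in\ker\grasp$ whose blocks are nonzero and not parallel to the corresponding $\particularForceBlock{i}$ works, which treats $n=2$ and $n\geq3$ uniformly.
\end{itemize}

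What the paper's route buys is the stronger fact that for $n\geq3$ no winch actuation is needed at all. Your speed bound, by contrast, relies essentially on $\dcableLength{i}\neq0$ at the zeros of $\sin(\xi t)$.

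One precision: if $\lambda_{j0}\neq0$, the non-parallelism must be imposed on the offset $\particularForceBlock{i}+\lambda_{h_i,0}\coord{w}{\delta_i}+\lambda_{h_i^+,0}\coord{w}{\bar{\delta}_i}$ rather than on $\particularForceBlock{i}$. Taking $\lambda_{j0}=0$ avoids this.
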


\begin{proof}
Necessity follows from
Proposition~\ref{prop:one_carrier_impossibility}.
Proposition~\ref{prop:two_carrier_unrestricted_motion} proves
sufficiency for \(n=2\). For every \(n\geq3\), existence follows from
the fixed-length construction in
\cite{gabellieri2025coordinated}, admissible here by
choosing constant cable lengths.
\end{proof}

\subsection{Structural Limitation of Two-Carrier Level Motion}

\begin{prop}[Obstruction to two-carrier periodic level flight]
\label{prop:two_carrier_level_obstruction}
{Suppose that \(n=2\), \eqref{eq:assumption_n2} holds, and the
bounded internal-force parameter \(\lambda(t)\) is continuously
differentiable and remains in a compact admissible interval. Then no
periodic level carrier motion can be non-stop for all time.}
\end{prop}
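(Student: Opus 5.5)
The plan is to reduce level non-stop motion for two carriers to a statement about the scalar internal-force parameter \(\lambda(t)\), and then use a Rolle-type argument. For \(n=2\) the force family \eqref{eq:forces_n2} is one-dimensional, so every cable direction is a smooth function of \(\lambda\) alone: \(\cableDir{i}=\cableDir{i}(\lambda(t))\). Under the level assumption, each carrier uses the constant-altitude law \eqref{eq:constant_altitude_length}, with \eqref{eq:vertical_direction_nondegeneracy} and the length bounds in force. By Proposition~\ref{prop:constant_altitude_stopping_condition} (equivalently Corollary~\ref{cor:level_nonstop_condition}), carrier \(i\) is then stopped at time \(t\) exactly when \(\dcableDir{i}(t)=\bm0\). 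By Lemma~\ref{lemma:two_carrier_direction_motion}, which holds under \eqref{eq:assumption_n2}, this happens exactly when \(\dot\lambda(t)=0\). So both carriers stop precisely at the zeros of \(\dot\lambda\), and it is enough to show that \(\dot\lambda\) must vanish somewhere.

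Next, periodicity of the carrier motion has to be transferred to \(\lambda\). Here one point must be settled. If the level carrier motion is periodic with period \(T\), then \(\carrierPos{i}\) is \(T\)-periodic. Since the load pose is constant, \(\cableDir{i}\) is \(T\)-periodic by \eqref{eq:qiW}. I expect this step to be the main obstacle: deducing from periodic \(\cableDir{i}\) that \(\lambda\) itself is periodic, or at least that \(\lambda(t+T)=\lambda(t)\) at some \(t\).

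I would handle it by showing that \(\lambda\mapsto\cableDir{1}(\lambda)\) is injective on the admissible interval. Write \(\particularForceBlock{1}=\alpha\edgeDir{1}{2}+\bm p_\perp\) with \(\bm p_\perp\neq\bm0\) by \eqref{eq:assumption_n2}. Then \(\cableForce{1}=(\alpha+\lambda)\edgeDir{1}{2}+\bm p_\perp\) traces a line not passing through the origin. Distinct points on such a line have distinct normalized directions, because two of them can share a direction only if the line passes through the origin. Hence \(\cableDir{1}(\lambda(t+T))=\cableDir{1}(\lambda(t))\) forces \(\lambda(t+T)=\lambda(t)\), so \(\lambda\) is \(T\)-periodic.

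If the statement is read as asking only for periodicity of \(\lambda\) with \(\lambda\) confined to a compact interval, I would add a fallback argument. If \(\dot\lambda\) never vanished, then by continuity it would have constant sign, so \(\lambda\) would be strictly monotone. A strictly monotone function cannot be periodic unless it is constant, and constant \(\lambda\) contradicts \(\dot\lambda\neq0\). Equivalently, \(\lambda\) can be compared with \(\lambda(t+T)\) directly.

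Finally, I apply Rolle's theorem on \([0,T]\). Since \(\lambda\in C^1\) and \(\lambda(0)=\lambda(T)\), there is some \(t^\ast\in(0,T)\) with \(\dot\lambda(t^\ast)=0\). By the first step, \(\dcableDir{1}(t^\ast)=\dcableDir{2}(t^\ast)=\bm0\), and therefore \(\dcarrierPos{i}(t^\ast)=\bm0\) by \eqref{eq:constant_altitude_velocity}. This contradicts the uniform speed lower bound in Definition~\ref{def:admissible_nonstop_motion}. By periodicity, such stops in fact recur in every period.
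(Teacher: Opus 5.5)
Your proof is correct and follows the paper's argument: stops of the carriers are reduced to zeros of $\dot\lambda$ via Proposition~\ref{prop:constant_altitude_stopping_condition} and Lemma~\ref{lemma:two_carrier_direction_motion}, and then a stationary point of the periodic $C^1$ function $\lambda$ is invoked (Rolle). Your injectivity argument for $\lambda\mapsto\cableDir{1}(\lambda)$ transfers periodicity from the carrier positions to $\lambda$, which makes explicit a step the paper's proof takes for granted.
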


\begin{proof}
For level motion, Proposition~\ref{prop:constant_altitude_stopping_condition}
and Lemma~\ref{lemma:two_carrier_direction_motion} give
$
  \dcarrierPos{i}(t)=\bm0
  \quad\Longleftrightarrow\quad
  \dot\lambda(t)=0.
$
Every continuously differentiable periodic scalar function has a
stationary point in each period. Hence, \(\dot\lambda(t)=0\) at some
instant, at which both carriers stop.
\end{proof}
{The obstruction in
Proposition~\ref{prop:two_carrier_level_obstruction} does not contradict
Proposition~\ref{prop:two_carrier_unrestricted_motion}. In the
unrestricted construction, \(\dcableLength{i}\) is deliberately
nonzero whenever \(\dcableDir{i}\) vanishes. Under exact constant
altitude, the cable-length rate is instead fixed by
\eqref{eq:constant_altitude_length_rate} and cannot independently
compensate for a stationary cable direction.}

\subsection{Level Non-stop Motion with \(n\geq3\)}

For \(n\geq3\), define the local internal-force plane
\begin{equation}
  \internalPlane{i}
  \define
  \operatorname{span}
  \left\{
    \coord{w}{\delta_i},
    \coord{w}{\bar{\delta}_i}
  \right\}
  \subset\Rthree.
\label{eq:internal_force_plane}
\end{equation}

\begin{lemma}[Force motion induces direction motion]
\label{lemma:force_direction_motion}
Consider the cycle-based force family
\eqref{eq:cycle_force_distribution} for \(n\geq3\). If
\(\particularForceBlock{i}\notin\internalPlane{i}\), then
$  \dcableForce{i}\neq\bm0
  \quad\Longrightarrow\quad
  \dcableDir{i}\neq\bm0.
$
\end{lemma}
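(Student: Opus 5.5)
We argue by contradiction, mirroring the two-carrier argument of Lemma~\ref{lemma:two_carrier_direction_motion}. Since the force distribution lies in \(\mathcal F_+\), the direction is \(\cableDir{i}=\cableForce{i}/\norm{\cableForce{i}}\). Differentiating this normalization gives
\[
  \dcableDir{i}
  =
  \norm{\cableForce{i}}^{-1}
  \big(\bm I_3-\cableDir{i}{\cableDir{i}}^\top\big)\dcableForce{i}.
\]
This is exactly the perpendicular component \((\dcableForce{i})^\perp/\cableTension{i}\) from~\eqref{eq:fidot1_base}. Hence \(\dcableDir{i}=\bm0\) holds if and only if \(\dcableForce{i}\) is parallel to \(\cableDir{i}\), equivalently to \(\cableForce{i}\). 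The plan is to show that this parallelism is incompatible with \(\dcableForce{i}\neq\bm0\) under the hypothesis \(\particularForceBlock{i}\notin\internalPlane{i}\).

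First, identify where \(\dcableForce{i}\) lives. By Proposition~\ref{prop:cycle_internal_force_subspace}, \(\cableForceStack=\particularForce+\cycleNullMatrix(\hamCycle)\cycleParam\) with \(\particularForce\) constant. Corollary~\ref{cor:local_internal_force_representation} then gives \(\cableForce{i}=\particularForceBlock{i}+\lambda_{h_i}\coord{w}{\delta_i}+\lambda_{h_i^+}\coord{w}{\bar\delta_i}\). Differentiating, \(\dcableForce{i}=\dot\lambda_{h_i}\coord{w}{\delta_i}+\dot\lambda_{h_i^+}\coord{w}{\bar\delta_i}\in\internalPlane{i}\); this holds because the local directions are constant, the load pose being fixed.

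Second, suppose \(\dcableForce{i}\neq\bm0\) and \(\dcableDir{i}=\bm0\). Then \(\cableForce{i}=c\,\dcableForce{i}\) for some scalar \(c\); in fact \(c\neq0\), since \(\cableForce{i}\neq\bm0\) by tensility. Thus \(\cableForce{i}\in\internalPlane{i}\). Since the internal part \(\cableForce{i}-\particularForceBlock{i}\) also lies in \(\internalPlane{i}\), it follows that \(\particularForceBlock{i}=\cableForce{i}-(\cableForce{i}-\particularForceBlock{i})\in\internalPlane{i}\). This contradicts the hypothesis, so \(\dcableDir{i}\neq\bm0\).

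The only delicate point is the first step. We must confirm that the blocks \(\coord{w}{\delta_i},\coord{w}{\bar\delta_i}\) are time-invariant, which holds because they are built from \(\loadRot\) and the body-fixed \(\attachmentCoord{j}\) via~\eqref{eq:edge_unit_direction}. We must also handle the possibly degenerate case in which \(\internalPlane{i}\) is a line, i.e.\ \(\delta_i\parallel\bar\delta_i\). The argument only uses membership in the span, so it goes through unchanged in that case.
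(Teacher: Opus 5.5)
Your proof is correct and follows the paper's argument. Both use $\dcableForce{i}\in\internalPlane{i}$ from Corollary~\ref{cor:local_internal_force_representation}, the fact that $\cableForce{i}-\particularForceBlock{i}\in\internalPlane{i}$ (so $\particularForceBlock{i}\notin\internalPlane{i}$ keeps $\cableForce{i}$ out of $\internalPlane{i}$), and the projector formula for $\dcableDir{i}$. The only difference is phrasing: you argue by contradiction, noting that $\dcableDir{i}=\bm0$ would put $\cableForce{i}$, and hence $\particularForceBlock{i}$, in $\internalPlane{i}$, whereas the paper argues directly that no nonzero vector of $\internalPlane{i}$ can be parallel to $\cableForce{i}$.
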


\begin{proof}
From Corollary~\ref{cor:local_internal_force_representation},
\begin{equation}
  \dcableForce{i}
  =
  \dot\lambda_{h_i}\coord{w}{\delta_i}
  +
  \dot\lambda_{h_i^+}\coord{w}{\bar{\delta}_i}
  \in
  \internalPlane{i}.
\label{eq:fdot_plane}
\end{equation}
Moreover,
\(
  \cableForce{i}
  =
  \particularForceBlock{i}
  +
  \internalForceBlock{i}
\),
where
\(\internalForceBlock{i}\in\internalPlane{i}\).
Thus, \(\cableForce{i}\) retains a nonzero component orthogonal to
\(\internalPlane{i}\), so any nonzero
\(\dcableForce{i}\in\internalPlane{i}\) cannot be parallel to
\(\cableForce{i}\). Since
\begin{equation}
  \dcableDir{i}
  =
  \norm{\cableForce{i}}^{-1}
  \big(
    \bm I_3-\cableDir{i}{\cableDir{i}}^\top
  \big)
  \dcableForce{i},
\label{eq:qdot_from_fdot}
\end{equation}
the force derivative has a nonzero component orthogonal to the cable
direction, proving the claim.
\end{proof}
\begin{figure}[htbp]
  \centering
  \begin{minipage} {0.48\columnwidth}
{\includegraphics[width=\linewidth]{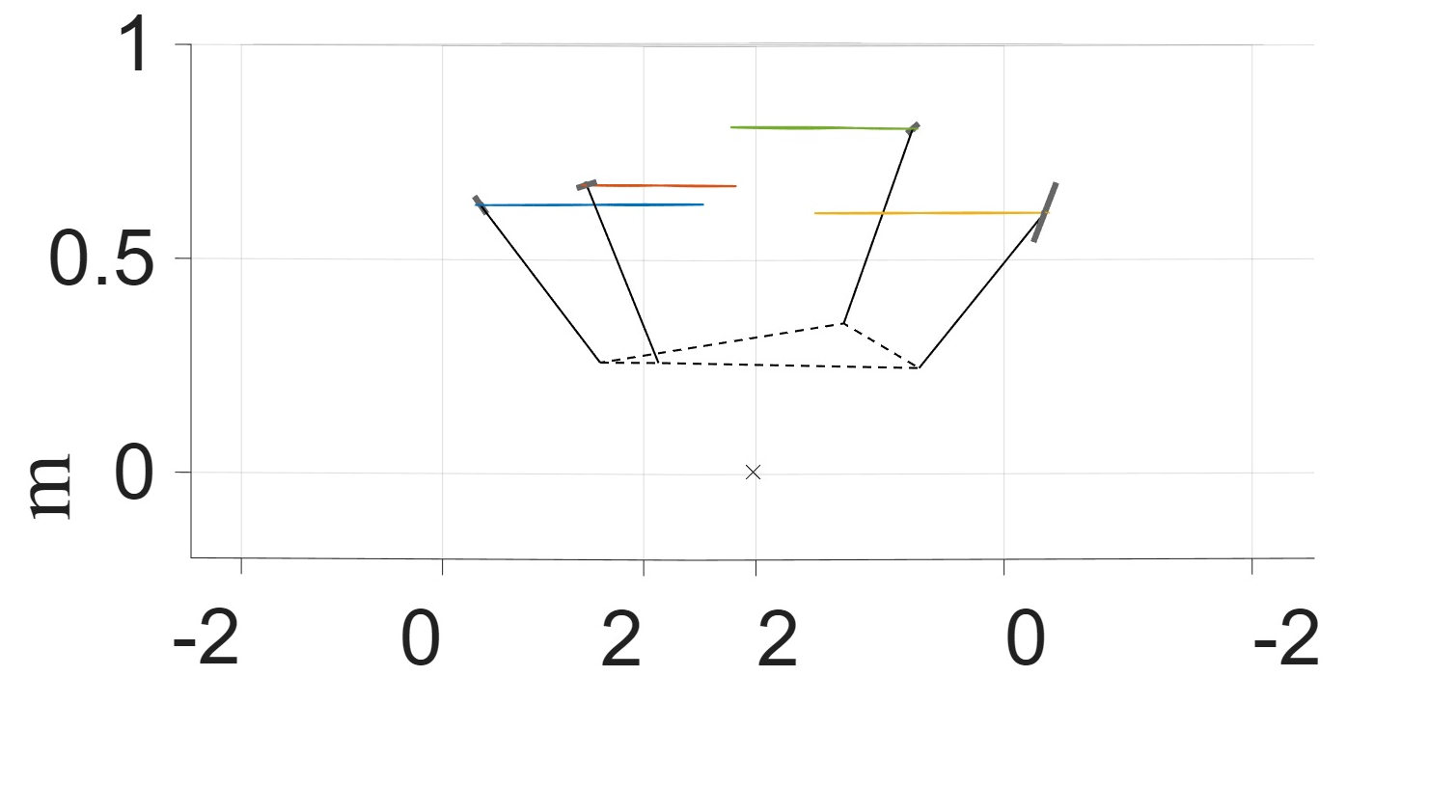}}
{\includegraphics[width=\linewidth]{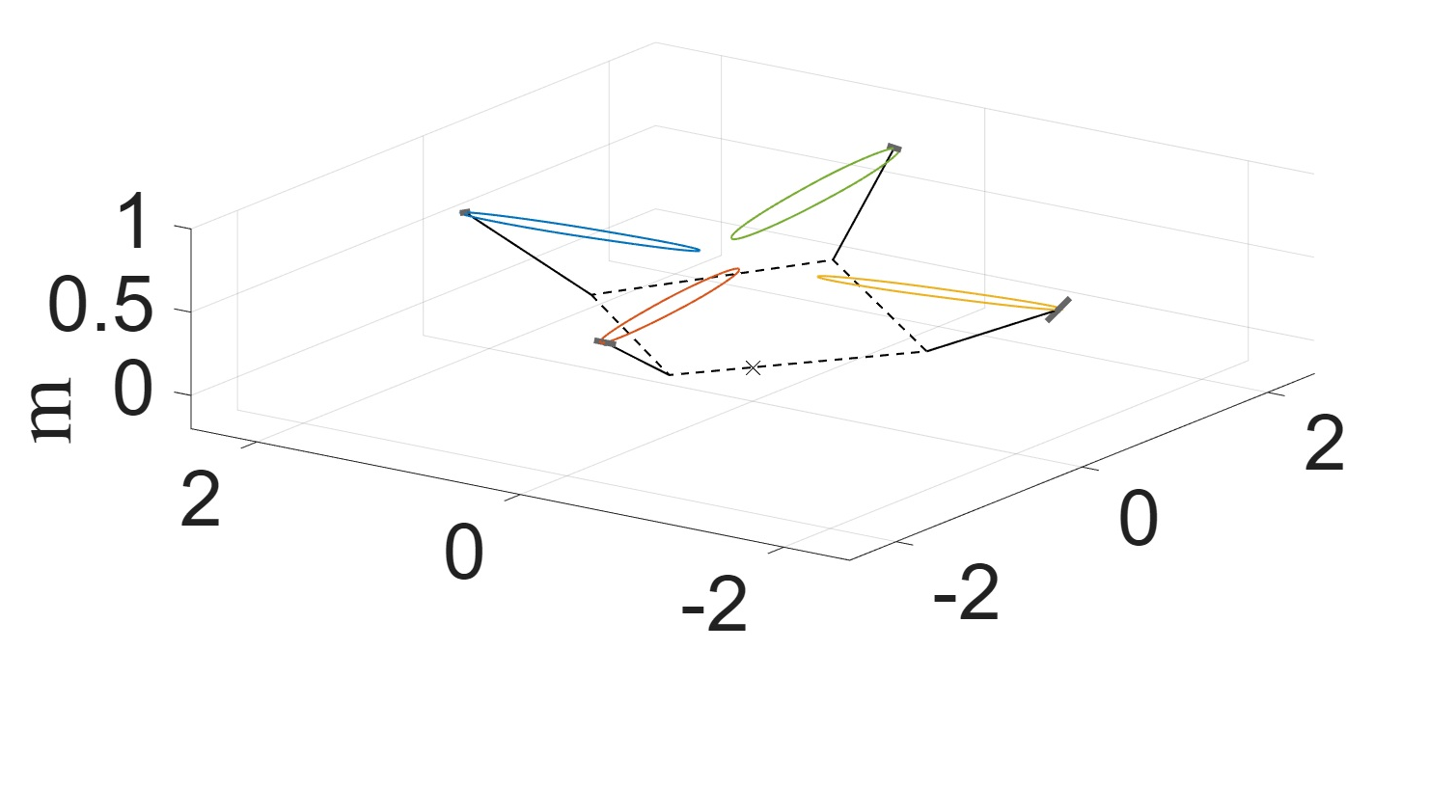}}
\caption*{$\ell_i$ as in \eqref{eq:level_length_construction}}
  \end{minipage}
  \begin{minipage}{0.48\columnwidth}
\includegraphics[width=\linewidth]{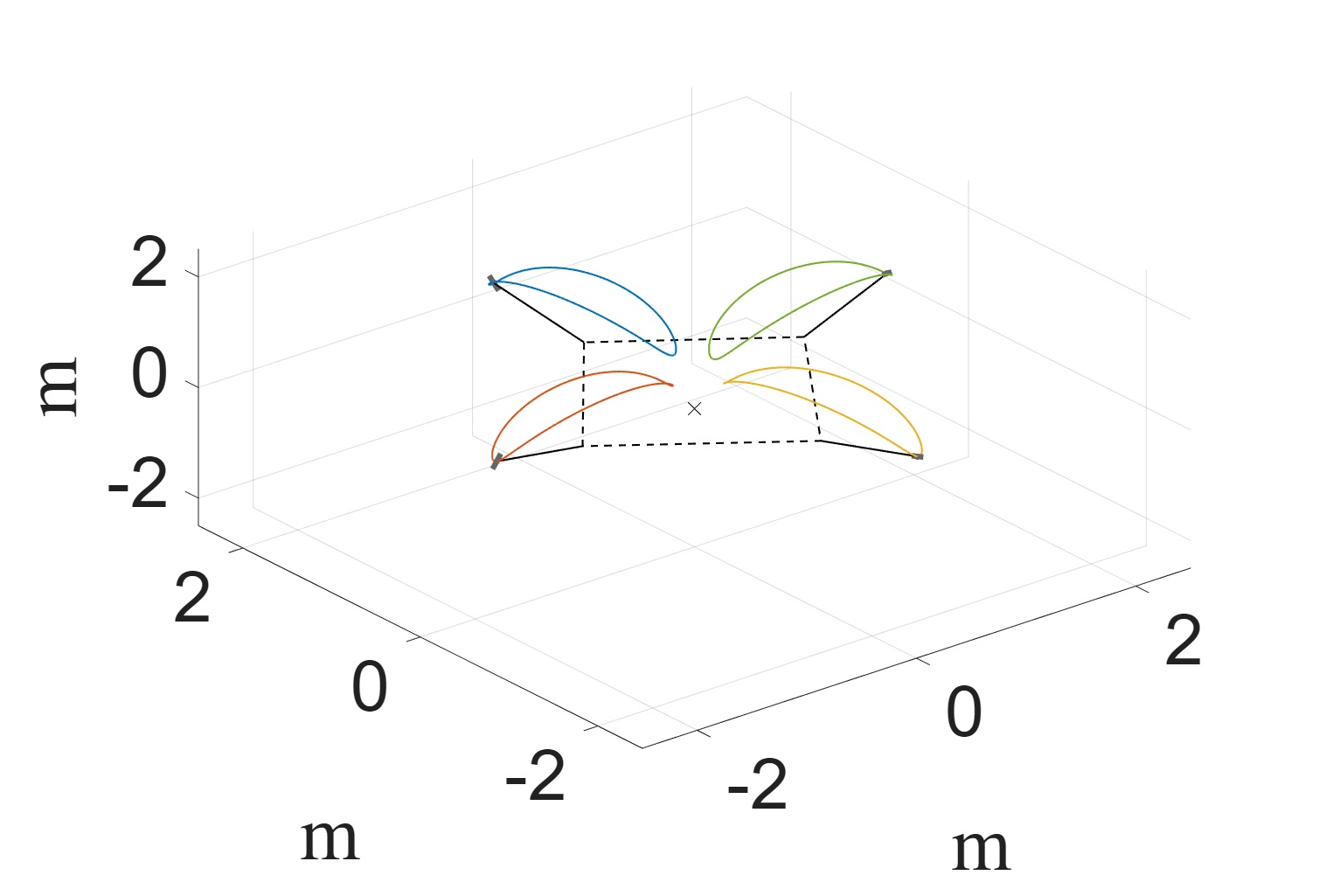}
\caption*{$\dot{\ell}_i=0$}
  \end{minipage}
  \begin{minipage} {0.48\columnwidth}
{\includegraphics[width=\linewidth]{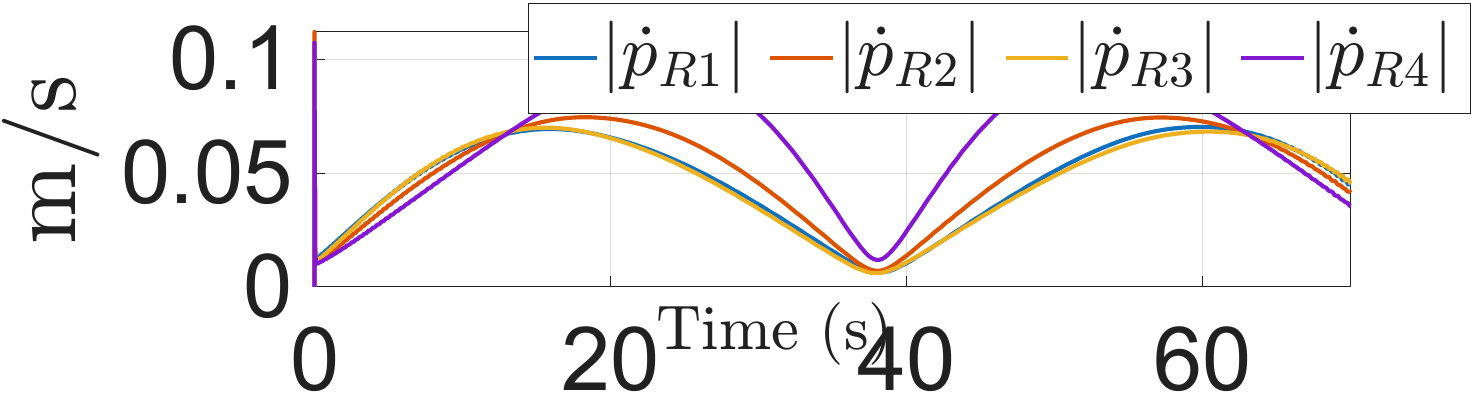}}
\caption*{$\ell_i$ as in \eqref{eq:level_length_construction}}
\end{minipage}
\begin{minipage}{0.48\columnwidth}
{\includegraphics[width=\linewidth]{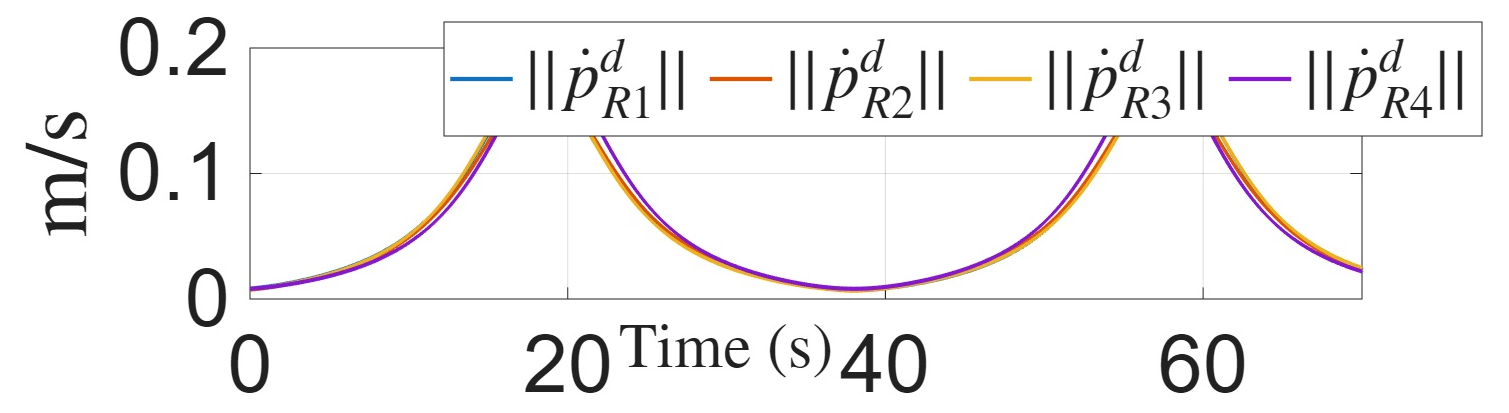}}
\caption*{$\dot{\ell}_i=0$}\end{minipage}
  \subfloat{\includegraphics[width=0.75\linewidth]{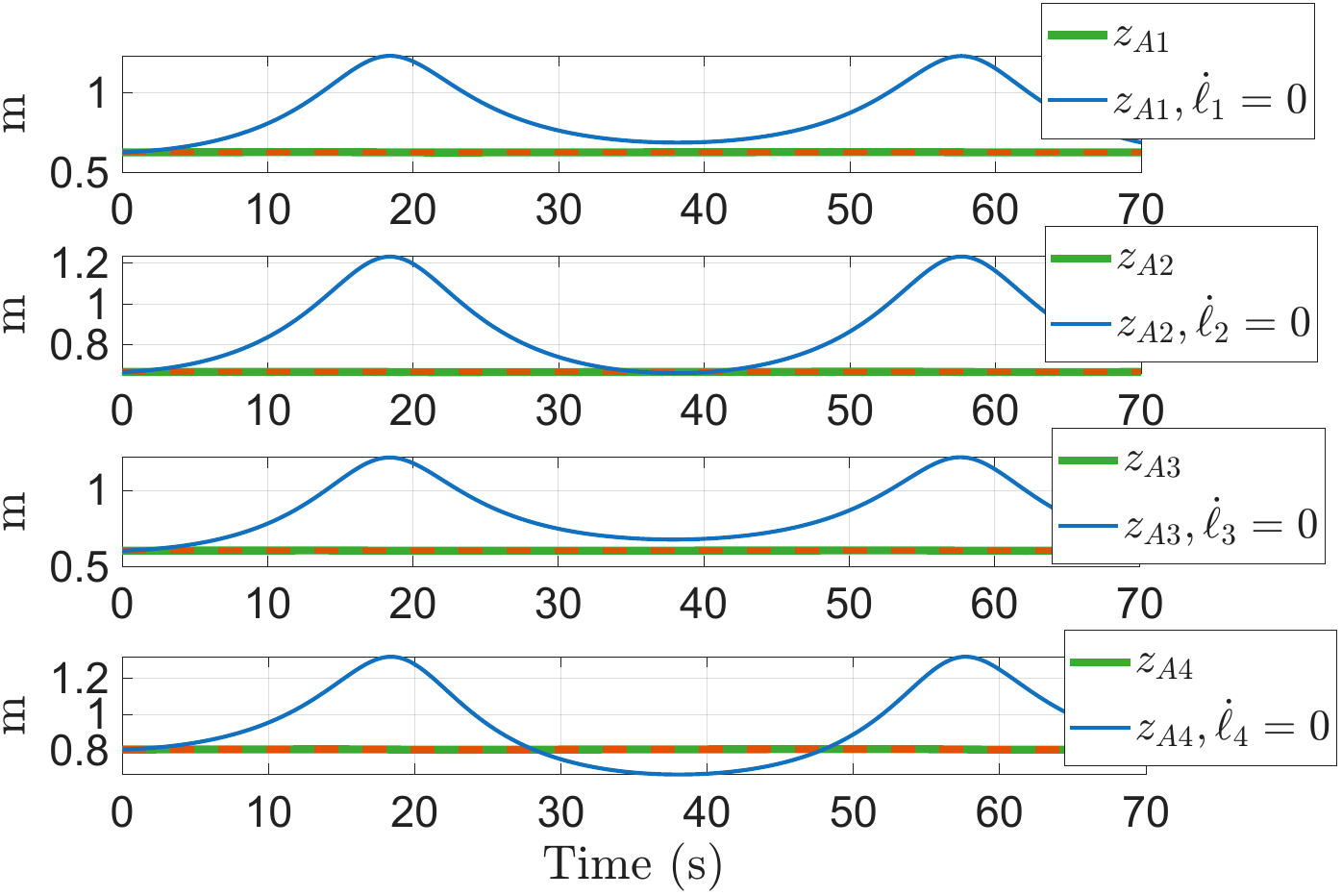}}
  \hfill
\subfloat{\includegraphics[width=0.8\linewidth]{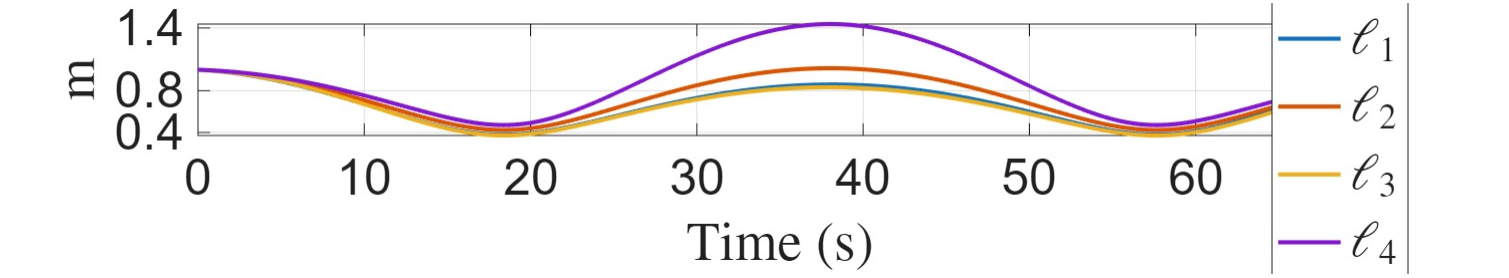}}
  \hfill
  \caption{Results with cable length variation (for level flight) and without ($\dot{\ell}_i=0$). In the first row, a schematic representation of the systems, with the UAV paths highlighted with colored lines. 
The carrier speeds remain positive in both cases (second row). The third row shows that oscillations in the UAV altitudes for $\dot{\ell}_i=0$ are of several decimetres, while their amplitude is reduced to around $0.001\si{\metre}$  with cable length variations, reported in the last row.}
  \label{fig:main_label}
\end{figure}
\begin{theorem}[Level equilibrium-preserving non-stop motion]
\label{thm:level_nonstop_motion}
Let \(n\geq3\), and suppose that, for every \(i=1,\ldots,n\),
\begin{equation}
  \rank
  \begin{bmatrix}
    \coord{w}{\delta_i}
    &
    \coord{w}{\bar{\delta}_i}
  \end{bmatrix}
  =2,
  \qquad
  \particularForceBlock{i}
  \notin
  \internalPlane{i}.
\label{eq:assumption_n3}
\end{equation}
Choose the cycle parameters as
\begin{equation}
  \lambda_j(t)
  =
  \lambda_{j0}
  +
  a\sin(\xi t+\varphi_j),
  \qquad
  j=1,\ldots,n,
\label{eq:phase_shifted_cycle_parameters}
\end{equation}
where \(a\neq0\), \(\xi>0\), and the phases satisfy
\begin{equation}
  \varphi_{h_i}
  -
  \varphi_{h_i^+}
  \notin
  \pi\mathbb Z,
  \qquad
  i=1,\ldots,n.
\label{eq:adjacent_phase_condition}
\end{equation}
Assume that the resulting force trajectory remains in
\(\mathcal F_+\), that
$  \abs{q_{i,z}(t)}
  \geq
  \varepsilon_i
  >0,
  \qquad
  \forall t\geq0,
  i=1,\ldots,n,
$ and that
\begin{equation}
  \cableLength{i}(t)
  =
  \frac{
    h_i-z_{B_i}
  }{
    q_{i,z}(t)
  },
  \qquad
  i=1,\ldots,n,
\label{eq:level_length_construction}
\end{equation}
satisfy their admissible bounds. Then, the resulting carrier
trajectories preserve the load equilibrium, remain at the constant
altitudes \(h_i\), and are non-stop.
\end{theorem}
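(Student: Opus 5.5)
The argument has three parts: equilibrium preservation, constant altitude, and the non-stop property. The first two follow from earlier results. The third reduces, through Corollary~\ref{cor:level_nonstop_condition} and Lemma~\ref{lemma:force_direction_motion}, to showing that \(\dcableForce{i}(t)\neq\bm0\) for all \(t\) and every \(i\).

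\emph{Equilibrium.} The forces are given by \eqref{eq:cycle_force_distribution} with \(\lambda(t)\) chosen as in \eqref{eq:phase_shifted_cycle_parameters}. They stay in \(\mathcal F_+\) by assumption, so Proposition~\ref{prop:cycle_internal_force_subspace} gives \(\grasp\cableForceStack(t)=\equilibriumWrench\) for all \(t\).

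\emph{Altitude.} The tensile forces are smooth in \(t\), so \(\cableDir{i}=\cableForce{i}/\norm{\cableForce{i}}\) is continuously differentiable. The bound \(\abs{q_{i,z}}\geq\varepsilon_i\) and the admissible length bounds are assumed. Corollary~\ref{cor:constant_altitude_realization} then shows that \eqref{eq:level_length_construction} yields \(z_{A_i}(t)=h_i\).

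\emph{Non-stop.} Everything is \(2\pi/\xi\)-periodic and continuous, so the hypotheses of Corollary~\ref{cor:level_nonstop_condition} hold. It therefore suffices to prove \(\dcableDir{i}(t)\neq\bm0\) for all \(t\). By Lemma~\ref{lemma:force_direction_motion}, which uses \(\particularForceBlock{i}\notin\internalPlane{i}\), this reduces to proving \(\dcableForce{i}(t)\neq\bm0\).

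By \eqref{eq:fdot_plane},
\[
\dcableForce{i}(t)=a\xi\big(\cos(\xi t+\varphi_{h_i})\,\coord{w}{\delta_i}+\cos(\xi t+\varphi_{h_i^+})\,\coord{w}{\bar\delta_i}\big).
\]
The rank-2 condition in \eqref{eq:assumption_n3} says \(\coord{w}{\delta_i}\) and \(\coord{w}{\bar\delta_i}\) are linearly independent. Since \(a\xi\neq0\), the vector vanishes only if both cosines vanish at the same \(t\). That would require \(\xi t+\varphi_{h_i}\in\pi/2+\pi\mathbb Z\) and \(\xi t+\varphi_{h_i^+}\in\pi/2+\pi\mathbb Z\). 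Subtracting the two conditions gives \(\varphi_{h_i}-\varphi_{h_i^+}\in\pi\mathbb Z\), which contradicts \eqref{eq:adjacent_phase_condition}. Hence \(\dcableForce{i}(t)\neq\bm0\) for every \(t\) and every \(i\).

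Corollary~\ref{cor:level_nonstop_condition} then gives a positive speed lower bound for each carrier over one period. Taking the minimum over the \(n\) carriers yields a common \(\speedLowerBound>0\), which completes the argument.

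I expect the main obstacle to be bookkeeping: confirming that the edge indexing of \(h_i,h_i^+\) in Corollary~\ref{cor:local_internal_force_representation} matches the phase indices. I also need to check that Lemma~\ref{lemma:force_direction_motion} applies pointwise in time along the whole trajectory. Both look routine once the blocks are written out.
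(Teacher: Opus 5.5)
Your proposal is correct and follows the paper's proof essentially step for step. You use the same cosine/phase-difference argument with the rank-2 condition to get $\dcableForce{i}(t)\neq\bm0$, Lemma~\ref{lemma:force_direction_motion} to upgrade this to $\dcableDir{i}(t)\neq\bm0$, and the constant-altitude stopping equivalence plus periodicity to obtain a common speed bound; the only differences are cosmetic (you cite Corollary~\ref{cor:level_nonstop_condition}, which packages Proposition~\ref{prop:constant_altitude_stopping_condition} with the periodic-minimum argument, and you make equilibrium preservation explicit via Proposition~\ref{prop:cycle_internal_force_subspace}, which the paper leaves implicit).
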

\ifarXiv
\begin{figure*}
    \centering
     \begin{minipage}{\textwidth}
    \includegraphics[width=0.33\linewidth]{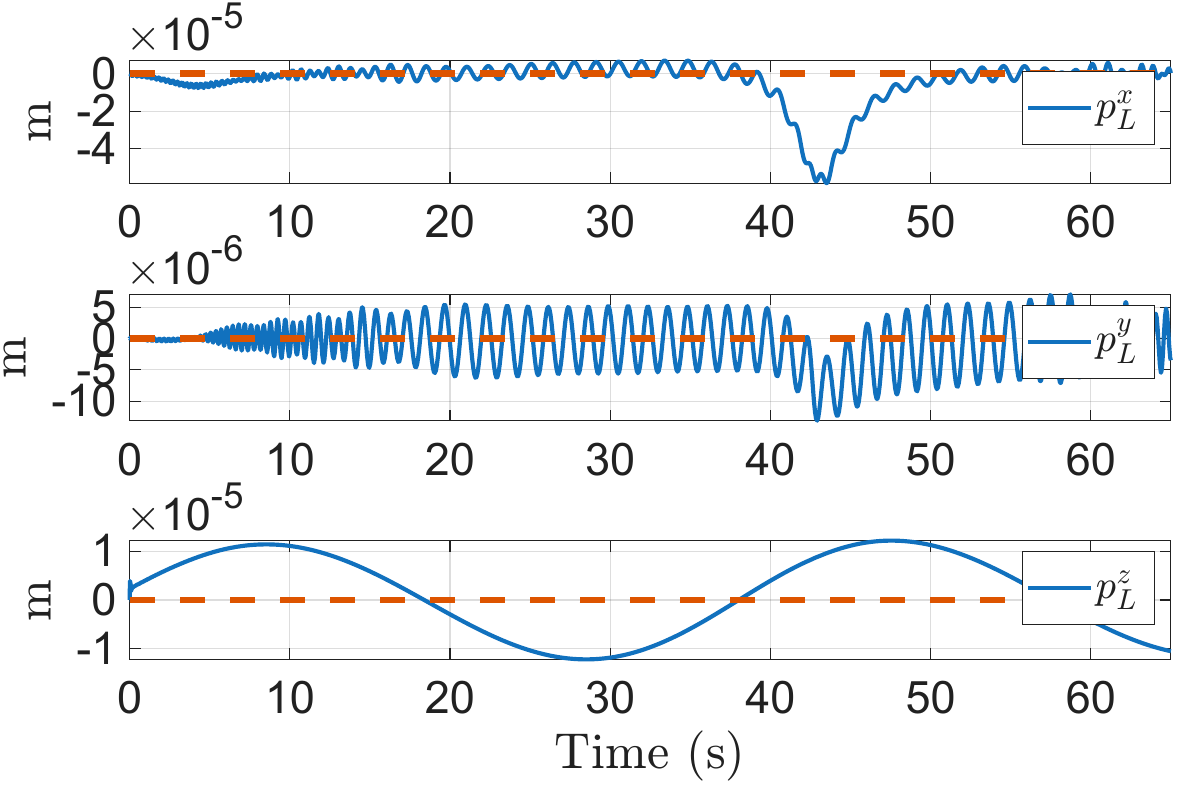}
    \includegraphics[width=0.33\linewidth]{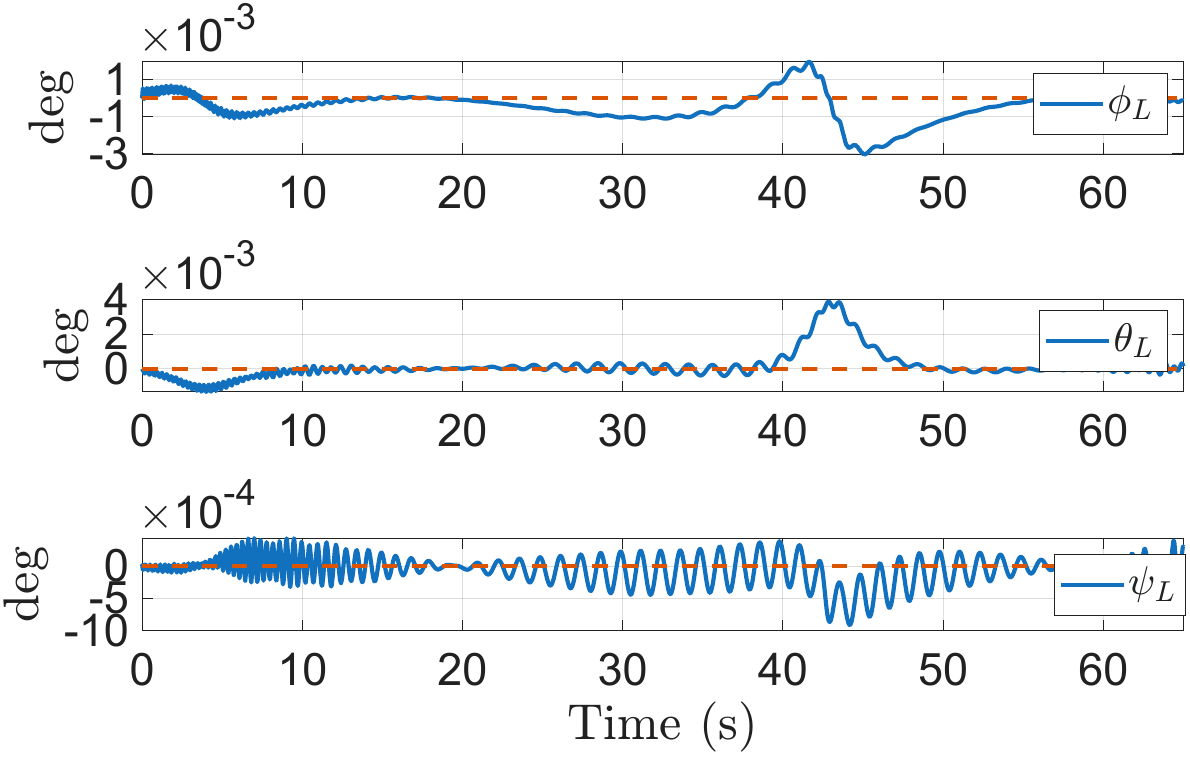}
           \includegraphics[width=0.30\linewidth]{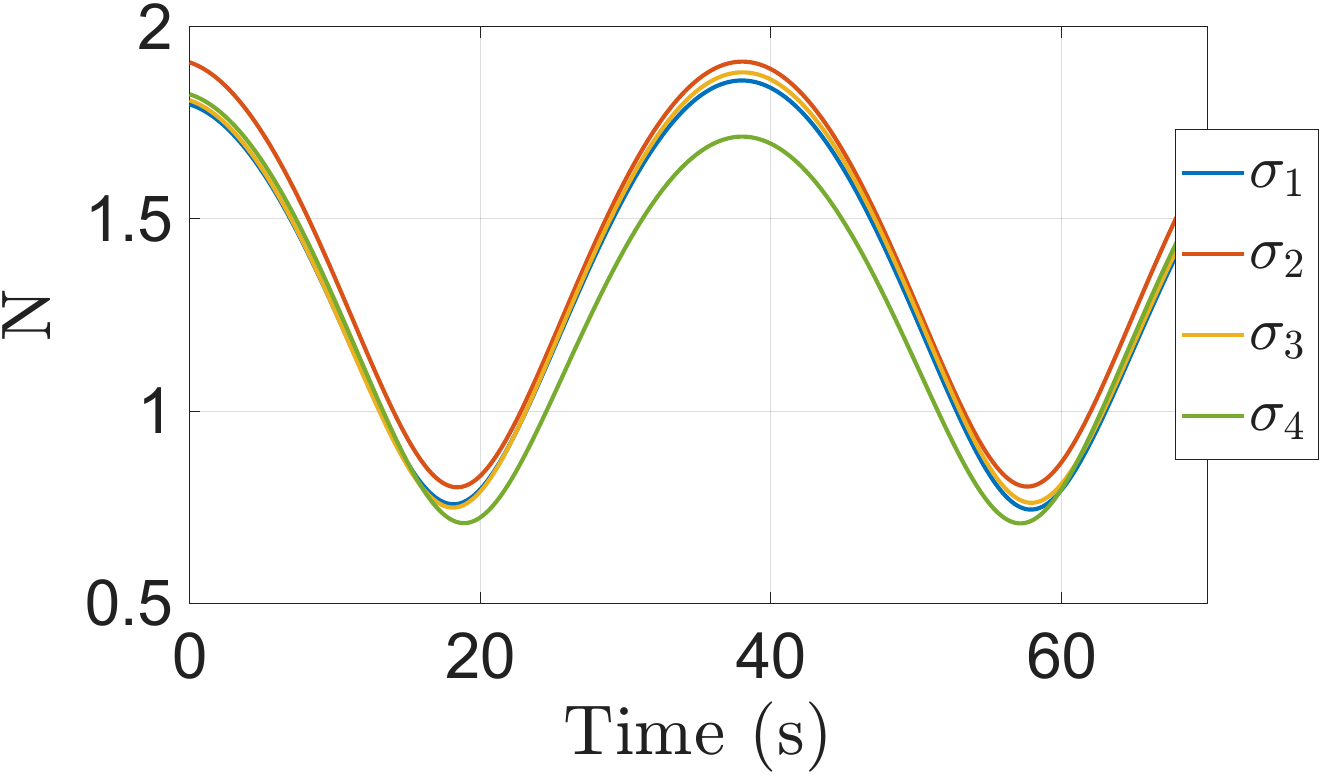}
          \caption*{$\ell_i$ as in \eqref{eq:level_length_construction}}
\end{minipage}
\begin{minipage}{\textwidth}
     \includegraphics[width=0.33\linewidth]{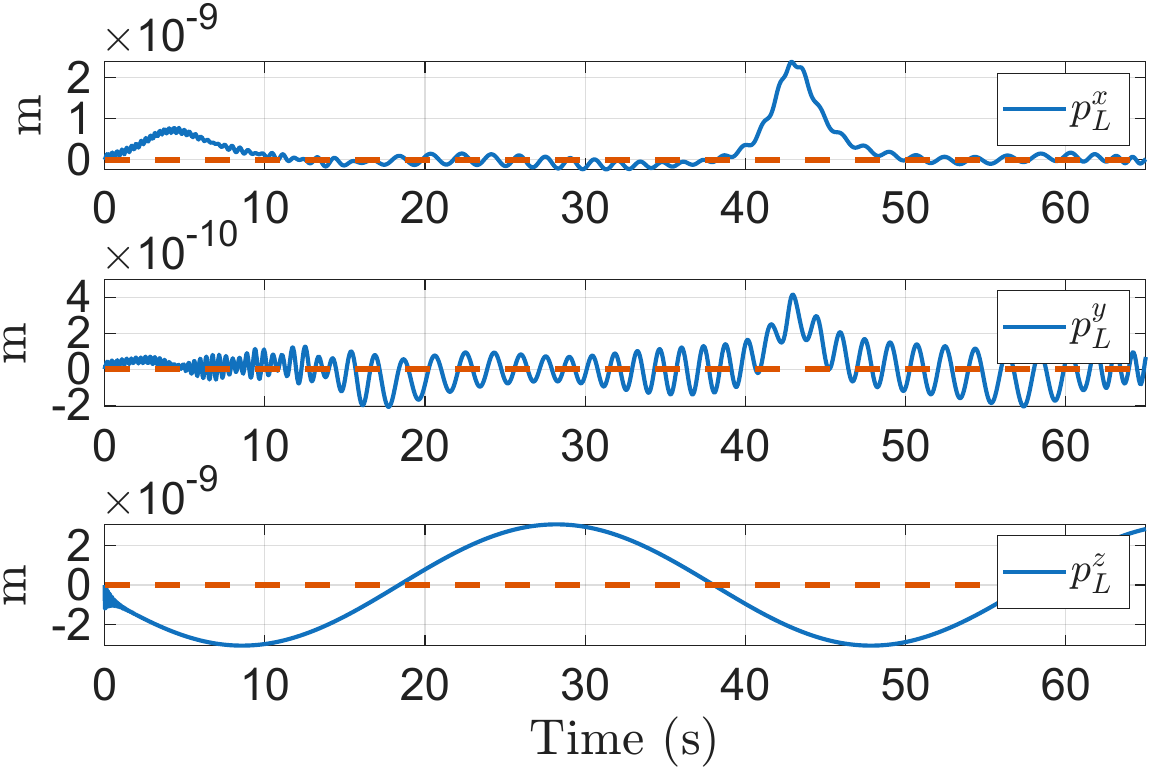}
       \includegraphics[width=0.33\linewidth]{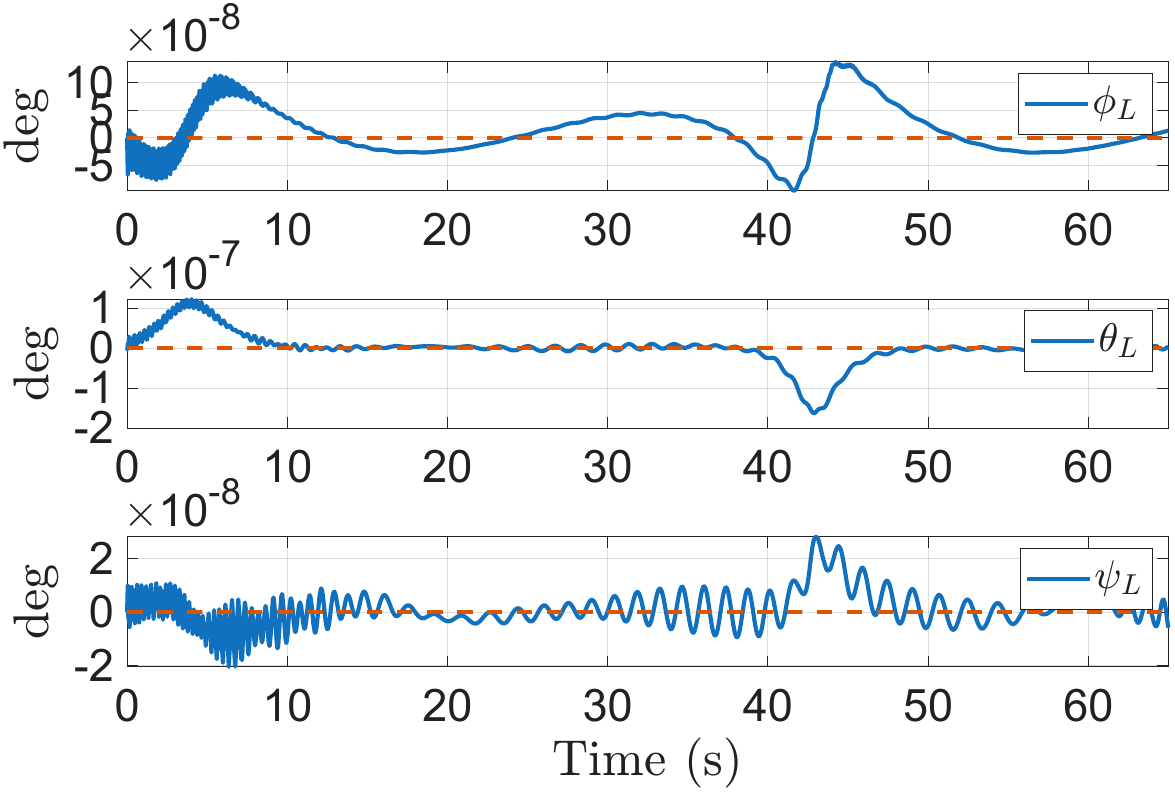}
              \includegraphics[width=0.30\linewidth]{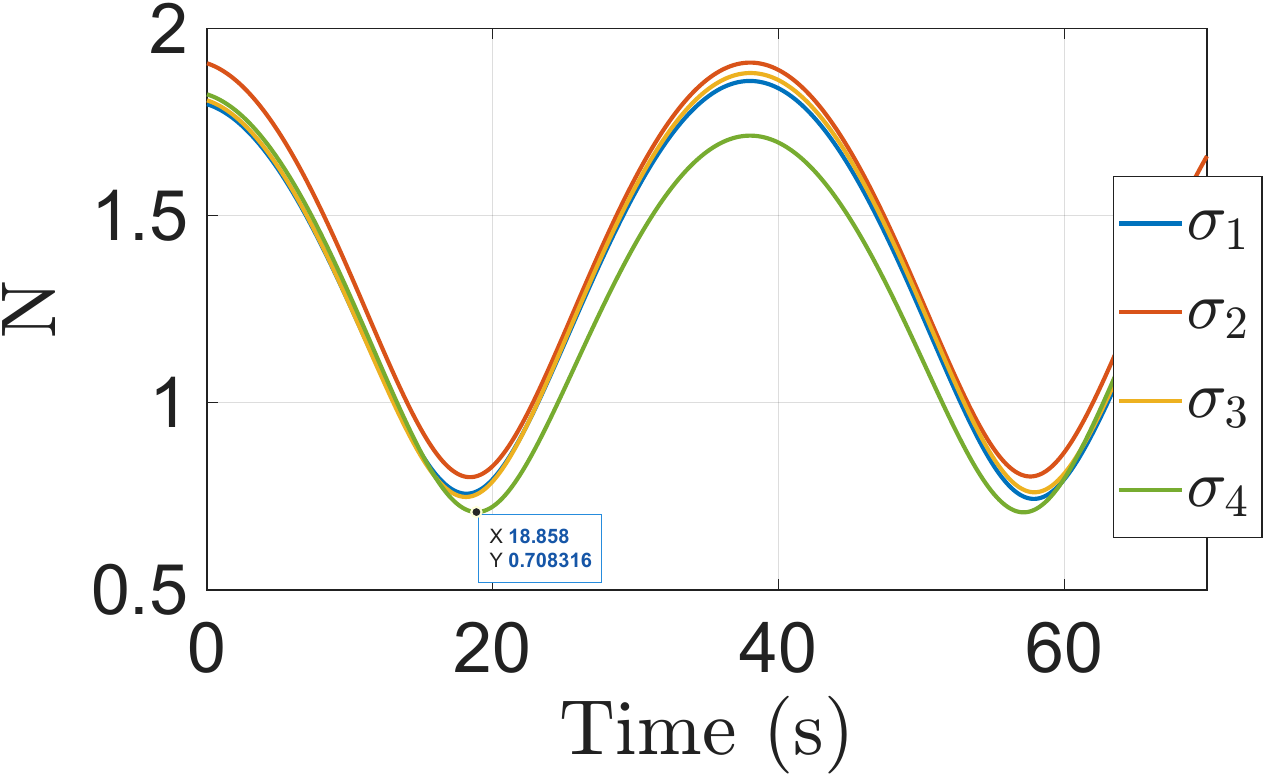}
\caption*{$\dot{\ell}_i=0$}
\end{minipage}
    \caption{Load pose and cable tension with length variation for level flight (first row) and without ($\dot{\ell}_i=0$). }
    \label{fig:pL-force}
\end{figure*}
\fi
\begin{proof}
Differentiating \eqref{eq:phase_shifted_cycle_parameters} and using
\eqref{eq:fdot_plane} gives
\begin{equation}
  \dcableForce{i}
  =
  a\xi
  \left[
    \cos(\xi t+\varphi_{h_i})
    \coord{w}{\delta_i}
    +
    \cos(\xi t+\varphi_{h_i^+})
    \coord{w}{\bar{\delta}_i}
  \right].
\label{eq:phase_shifted_force_derivative}
\end{equation}
By \eqref{eq:assumption_n3}, the two local direction arrays are
linearly independent. Therefore, \(\dcableForce{i}=\bm0\) could occur
only if
\[
  \cos(\xi t+\varphi_{h_i})=0,
  \qquad
  \cos(\xi t+\varphi_{h_i^+})=0.
\]
These equalities can hold simultaneously only if
\(
  \varphi_{h_i}-\varphi_{h_i^+}\in\pi\mathbb Z
\),
which is excluded by \eqref{eq:adjacent_phase_condition}. Hence,
\(\dcableForce{i}(t)\neq\bm0\) for every \(i\) and \(t\).
Lemma~\ref{lemma:force_direction_motion} then gives
\(
  \dcableDir{i}(t)\neq\bm0
\)
for every \(i\) and \(t\geq0\). By Corollary~\ref{cor:constant_altitude_realization},
\eqref{eq:level_length_construction} realizes
\(z_{A_i}(t)=h_i\).
Proposition~\ref{prop:constant_altitude_stopping_condition} therefore
gives
\(
  \dcarrierPos{i}(t)\neq\bm0
\)
for every \(i\) and \(t\geq0\).
Continuity and periodicity then yield a common positive minimum
\(\speedLowerBound>0\) over one period and all carriers.
\end{proof}

{A feasible phase selection always exists. For example, indexing
the cycle edges in their traversal order, one may choose
\(
  \varphi_j=(j-1)\pi/n
\).
The phase difference between every pair of consecutive cycle edges,
including the last and first edges, is then not an integer multiple of
\(\pi\).}

\begin{corollary}[Minimum number for level-flight construction]
\label{cor:minimum_level_carriers}
{Within the periodic internal-force constructions considered
above, unrestricted admissible non-stop motion requires at least two
carriers, whereas admissible level non-stop motion requires at least
three.}
\end{corollary}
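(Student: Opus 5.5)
The corollary has two parts, and I would prove each by combining an existing impossibility result with an existing construction. For the unrestricted part, necessity is Proposition~\ref{prop:one_carrier_impossibility}: a single carrier cannot be non-stop while keeping its cable length bounded. Sufficiency at \(n=2\) is the explicit periodic construction of Proposition~\ref{prop:two_carrier_unrestricted_motion}. Corollary~\ref{cor:minimum_unrestricted_carriers} already records this, so the first claim only needs to cite that corollary.

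For the level part, the aim is to exclude \(n=1\) and \(n=2\) and then supply a construction for \(n\geq3\).

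\begin{itemize}
\item \emph{Case \(n=1\).} A level motion is in particular an admissible motion. Proposition~\ref{prop:one_carrier_impossibility} shows that no admissible non-stop motion exists for one carrier, so no level one exists either.

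\item \emph{Case \(n=2\).} Within the periodic internal-force constructions the parameter \(\lambda(t)\) is continuously differentiable and periodic, and it stays in a compact tensile-admissible interval. Proposition~\ref{prop:two_carrier_level_obstruction} then applies directly. Combining Proposition~\ref{prop:constant_altitude_stopping_condition} with Lemma~\ref{lemma:two_carrier_direction_motion}, both carriers stop whenever \(\dot\lambda=0\). A periodic \(C^1\) scalar function always has a stationary point in each period, so the level motion is not non-stop.

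\item \emph{Case \(n\geq3\).} Theorem~\ref{thm:level_nonstop_motion} gives existence under its stated geometric and feasibility conditions. The remark after the theorem shows the phase condition \eqref{eq:adjacent_phase_condition} can always be met, for example with \(\varphi_j=(j-1)\pi/n\). So the minimum is attained at three carriers.
\end{itemize}

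The main obstacle is scope. The \(n=2\) obstruction uses that the internal-force space is one-dimensional and that \(\lambda\) is periodic. This is exactly why the statement says ``within the periodic internal-force constructions considered above.'' I would make that restriction explicit in the proof rather than claim a general impossibility.

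I would also check one point about the hypotheses of Proposition~\ref{prop:constant_altitude_stopping_condition}, namely \(\abs{q_{i,z}}\geq\varepsilon_i\) and the length bounds. Level motion requires these for the length law \eqref{eq:constant_altitude_length} to be defined at all, so they hold automatically for any admissible level construction of this type.
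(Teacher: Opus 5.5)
Your proposal is correct and follows the paper's proof: the paper cites Corollary~\ref{cor:minimum_unrestricted_carriers} for the unrestricted part, Proposition~\ref{prop:two_carrier_level_obstruction} to exclude periodic level non-stop motion for \(n=2\), and Theorem~\ref{thm:level_nonstop_motion} for existence when \(n\geq3\) under its stated assumptions. Your explicit exclusion of level motion for \(n=1\) via Proposition~\ref{prop:one_carrier_impossibility} (level motion being a special case of admissible motion) is a harmless addition that the paper leaves implicit.
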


\begin{proof}
The unrestricted claim follows from
Corollary~\ref{cor:minimum_unrestricted_carriers}.
Proposition~\ref{prop:two_carrier_level_obstruction} excludes periodic
level non-stop motion for two carriers, while
Theorem~\ref{thm:level_nonstop_motion} establishes its existence for
\(n\geq3\) under the stated geometric and feasibility assumptions.
\end{proof}
\section{Numerical Illustration}
\label{sec:numerical_validation}

Numerical simulations in Figure \ref{fig:main_label} illustrate the level non-stop construction of
Theorem~\ref{thm:level_nonstop_motion}. A rigid load of mass equal to $0.3\si{\kilogram}$ and diagonal rotational inertia with diagonal elements equal to $0.01\si{\kilogram} \cdot \si{\metre}^2$ is manipulated
by \(n=4\) aerial carriers. Cable elasticity is introduced as a non-ideality and is beneficial for the simulator's modularity: massless elastic cables with have a stiffness equal to $10^4\si{\newton\per\metre}$, a damping coefficient equal to $10^3\si{\newton\second\per\metre}$, and an initial rest length equal to $1\si{\metre}$. The load-side end points of the cables, generated as  $\attachmentCoord{i}=[R_z(\frac{2\pi}{in}+0.2r[1.2/, 0]/, r]^\top$ with $r$ a random number in $(0, 1)$, are $ \attachmentCoord{1}=[-0.036\, 1.200\, 0.256]^\top\si{\metre}$, $\attachmentCoord{2}=[-1.183\, -0.201\, 0.254]^\top\si{\metre}$, $\attachmentCoord{3}=[0.195\, -1.184\, 0.244]^\top\si{\metre}$, and   $\attachmentCoord{3}=[1.180\, 0.222\, 0.350]^\top\si{\metre}$.  The
carrier-side endpoints exactly realize the trajectories generated by
\eqref{eq:phase_shifted_cycle_parameters} with $ \lambda_{j0}=0$, $a=1.2\si{newton}$
  $\xi= 0.08\si{\radian\per\second}$, $\varphi_1=\varphi_3=1.77\si{\radian}$, and $\varphi_2=\varphi_4=\pi/2\si{\radian}$; the load motion results
from the simulated rigid-body dynamics. The cable length follows  
\eqref{eq:level_length_construction} with $h_i$ corresponding to the initial altitudes, respectively: $h_1=0.627\si{\metre}$, $h_2=0.670\si{\metre}$, $h_3=0.606\si{\metre}$, and $h_4=0.807\si{\metre}$.
\ifarXiv
Figure \ref{fig:pL-force} shows that the load pose was unperturbed, and the minimum cable tension was positive and equal to $0.71\si{\newton}$ in both cases.
\else
During the task executions, the load pose was unperturbed, and the minimum cable tension was positive and equal to $0.71\si{\newton}$ in both cases. Plots of the load pose and cable tension are omitted here for space limitations but are available at \url{}.
\fi

\section{Conclusions and Limitations}
\label{sec:conclusions}

Variable cable lengths provide radial freedom for shaping
equilibrium-preserving carrier trajectories and realizing prescribed
directional-position profiles, including constant altitude. While two
carriers suffice for unrestricted non-stop motion, periodic level
motion is obstructed for two carriers and can be constructed for
\(n\geq3\) under the stated feasibility conditions. The analysis assumes
taut cables, ideal cable-length and carrier-trajectory tracking, and a constant load pose,
while neglecting winch and carrier constraints, cable slackness,
aerodynamics, and collisions. Future work will address that through constrained trajectory generation, winch dynamics, and feedback control of
time-varying load poses.

\bibliographystyle{ieeetran}
\bibliography{biblio}

\end{document}

%% file: symbols.tex
\definecolor{afmodcolor}{RGB}{0,105,125}

\newcommand{\Eaff}{\mathbb{E}}                  
\newcommand{\Vtrans}{\mathbb{V}}               
\newcommand{\WrenchSpace}{\mathbb{W}}          

\newcommand{\Rthree}{\mathbb{R}^{3}}

\newcommand{\SOthree}{\mathrm{SO}(3)}
\newcommand{\SEthree}{\mathrm{SE}(3)}

\newcommand{\sphere}[1]{\mathbb{S}^{#1}}

\newcommand{\Frame}[1]{\mathcal{F}_{\mathsf{#1}}}
\newcommand{\coord}[2]{\bm{#2}^{\mathsf{#1}}}
\newcommand{\pointcoord}[2]{\bm{p}_{#2}^{\mathsf{#1}}}
\newcommand{\rot}[2]{\bm{R}_{\mathsf{#2}}^{\mathsf{#1}}}
\newcommand{\pose}[2]{\bm{T}_{\mathsf{#2}}^{\mathsf{#1}}}

\newcommand{\dcoord}[2]{\dot{\bm{#2}}^{\mathsf{#1}}}

\newcommand{\dpointcoord}[2]{\dot{\bm{p}}_{#2}^{\mathsf{#1}}}
\newcommand{\ddpointcoord}[2]{\ddot{\bm{p}}_{#2}^{\mathsf{#1}}}

\newcommand{\define}{:=}
\newcommand{\norm}[1]{\left\lVert #1\right\rVert}
\newcommand{\abs}[1]{\left\lvert #1\right\rvert}

\newcommand{\inner}[2]{\left\langle #1,#2\right\rangle}
\newcommand{\col}[1]{\operatorname{col}\!\left(#1\right)}

\newcommand{\im}{\operatorname{im}}
\newcommand{\rank}{\operatorname{rank}}

\newcommand{\ith}[1]{#1\text{-th}}

\newcommand{\fw}{w}                              
\newcommand{\fb}{b}                              

\newcommand{\frameW}{\Frame{\fw}}
\newcommand{\frameB}{\Frame{\fb}}

\newcommand{\originW}{O_{\mathsf{\fw}}}
\newcommand{\originB}{O_{\mathsf{\fb}}}

\newcommand{\anchorPoint}[1]{B_{#1}}             
\newcommand{\carrierPoint}[1]{A_{#1}}            

\newcommand{\loadPos}{\pointcoord{\fw}{\originB}}
\newcommand{\dloadPos}{\dpointcoord{\fw}{\originB}}
\newcommand{\ddloadPos}{\ddpointcoord{\fw}{\originB}}
\newcommand{\loadRot}{\rot{\fw}{\fb}}
\newcommand{\loadPose}{\pose{\fw}{\fb}}

\newcommand{\loadMass}{m_{\mathrm L}}
\newcommand{\loadInertia}{\bm J_{\mathrm L}^{\mathsf b}}
\newcommand{\gravityScalar}{g}
\newcommand{\worldVertical}{%
  \coord{\fw}{z_{\mathsf w}}%
}

\newcommand{\loadAngVel}{\coord{\fb}{\omega_{\mathrm L}}}
\newcommand{\dloadAngVel}{\dcoord{\fb}{\omega_{\mathrm L}}}

\newcommand{\cableLength}[1]{\ell_{#1}}
\newcommand{\dcableLength}[1]{\dot{\ell}_{#1}}

\newcommand{\cableLengthMin}{\underline{\ell}}
\newcommand{\cableLengthMax}{\overline{\ell}}
\newcommand{\cableTension}[1]{\sigma_{#1}}
\newcommand{\dcableTension}[1]{\dot{\sigma}_{#1}}

\newcommand{\cableDirIntrinsic}[1]{q_{#1}}

\newcommand{\attachmentCoord}[1]{\coord{\fb}{b_{#1}}}
\newcommand{\attachmentWorld}[1]{\coord{\fw}{r_{#1}}}
\newcommand{\cableDir}[1]{\coord{\fw}{q_{#1}}}
\newcommand{\dcableDir}[1]{\dcoord{\fw}{q_{#1}}}
\newcommand{\cableForce}[1]{\coord{\fw}{f_{#1}}}
\newcommand{\dcableForce}[1]{\dcoord{\fw}{f_{#1}}}
\newcommand{\carrierPos}[1]{\pointcoord{\fw}{A_{#1}}}
\newcommand{\dcarrierPos}[1]{\dpointcoord{\fw}{A_{#1}}}

\newcommand{\cableForceStack}{\coord{\fw}{f}}

\newcommand{\internalParam}{\bm\eta}
\newcommand{\cycleParam}{\bm\lambda}

\newcommand{\Grasp}{\mathcal{G}_{\originB}}
\newcommand{\grasp}{\bm G_{\originB}^{\mathsf w}}
\newcommand{\equilibriumWrench}{\bm w_{\mathrm{eq}}^{\mathsf w}}

\newcommand{\nullBasis}{\bm N_{G}^{\mathsf w}}
\newcommand{\cycleNullMatrix}{\bm N_{H}^{\mathsf w}}
\newcommand{\particularForce}{\bm f_{0}^{\mathsf w}}
\newcommand{\particularForceBlock}[1]{\bm f_{0#1}^{\mathsf w}}
\newcommand{\internalForceStack}{\widetilde{\bm f}^{\mathsf w}}
\newcommand{\internalForceBlock}[1]{\widetilde{\bm f}_{#1}^{\mathsf w}}

\newcommand{\hamCycle}{\mathcal{H}}

\newcommand{\incidence}{\bm H}
\newcommand{\edgeDir}[2]{\coord{\fw}{d_{#1#2}}}
\newcommand{\internalPlane}[1]{\mathcal{P}_{#1}}

\newcommand{\speedLowerBound}{\underline{v}}

\newtheorem{definition}{Definition} 
 
\newtheorem{prop}{Proposition} 
\newtheorem{corollary}{Corollary} 
\newtheorem{lemma}{Lemma}

\newtheorem{theorem}{Theorem}